\definecolor{red}{rgb}{1, 0, 0}
\newcolumntype{r}{>{\columncolor{red}}c}
\newcommand{\disjointUnion}{\sqcup}
\newcommand{\RR}{\ensuremath{\mathbb{R}}}
\newcommand{\bp}{\ensuremath{{\bf p}}}
\newcommand{\bq}{\ensuremath{{\bf q}}}
\newcommand{\bv}{\ensuremath{{\bf v}}}
\newcommand{\ba}{\ensuremath{{\bf a}}}
\newcommand{\bb}{\ensuremath{{\bf b}}}
\newcommand{\bl}{\ensuremath{L}}
\newcommand{\bc}{\ensuremath{{\bf c}}}
\newcommand{\bs}{\ensuremath{{\bf s}}}
\newcommand{\bx}{\ensuremath{{\bf x}}}
\newcommand{\bo}{\ensuremath{{\boldsymbol{\omega}}}}
\newcommand{\smallcdots}{\cdot\hspace{-1mm}\cdot\hspace{-1mm}\cdot}
\newcommand{\rmfill}{\hspace{-1mm}\smallcdots {\bf 0} \hspace{-1mm}\cdot\hspace{-1mm}\cdot\hspace{-.2mm}\cdot\hspace{-.8mm}}
\newlength{\LL}\settowidth{\LL}{200}
\newtheorem{theorem}{Theorem}   
\newtheorem{rmk}[theorem]{Remark}
\newtheorem{defn}[theorem]{Definition}
\newtheorem{prop}[theorem]{Proposition}
\begin{document}
		

		\title{Combinatorics and the Rigidity of CAD Systems\footnote{Final version to appear in Symposium on Solid and Physical
		 	Modeling '12 and associated special issue of Computer Aided Design.}}

	\author{Audrey Lee-St.John \\
	 	Department of Computer Science \\
		Mount Holyoke College \\
		South Hadley, MA 01075 \\
		\tt{astjohn@mtholyoke.edu}
		\and 
	Jessica Sidman \\
		Department of Mathematics and Statistics\\
		Mount Holyoke College\\
		South Hadley, MA 01075\\
		\tt{jsidman@mtholyoke.edu}}
		\maketitle
	
\begin{abstract}
We study the rigidity of \emph{body-and-cad frameworks} which 
capture the majority of the geometric constraints used in 3D mechanical engineering CAD software.  We present a combinatorial characterization of the generic minimal rigidity of a subset of body-and-cad frameworks in which we treat 20 of the 21 body-and-cad constraints, omitting only point-point coincidences. While the handful of classical combinatorial characterizations of rigidity focus on distance constraints between points, this is the first result simultaneously addressing coincidence, angular, and distance constraints. Our result is 
stated in terms of the partitioning of a graph into edge-disjoint spanning trees. This combinatorial approach provides the theoretical basis for the development of deterministic algorithms (that will not depend on numerical methods) for analyzing the rigidity of body-and-cad frameworks.
\end{abstract}

\section{Introduction}
We study {\em body-and-cad} frameworks composed of rigid bodies with pairwise {\bf c}oincidence, {\bf a}ngular and {\bf d}istance constraints placed between geometric elements (points, lines, planes) rigidly affixed to each body. A framework is {\em flexible} if the bodies can move relative to each other while respecting the constraints;
otherwise, it is {\em rigid} (see Figures \ref{fig.minRigBodyCad} and \ref{fig.flexBodyCadExample}). A fundamental problem for both the user and the CAD software is to determine whether a system is flexible or rigid. 

\begin{figure}[thb]
	\centering \subfloat[A {\bf minimally rigid} body-and-cad framework with (i) a line-line coincidence along the dotted blue axis, (ii) a point-plane coincidence between the red point $r$ and shaded plane, and (iii) a point-point distance between the green points $p$ and $q$.] {\label{fig.minRigBodyCad}
	\begin{minipage}[b]{.46\linewidth}
	\centering
	\includegraphics[width=.7\linewidth]{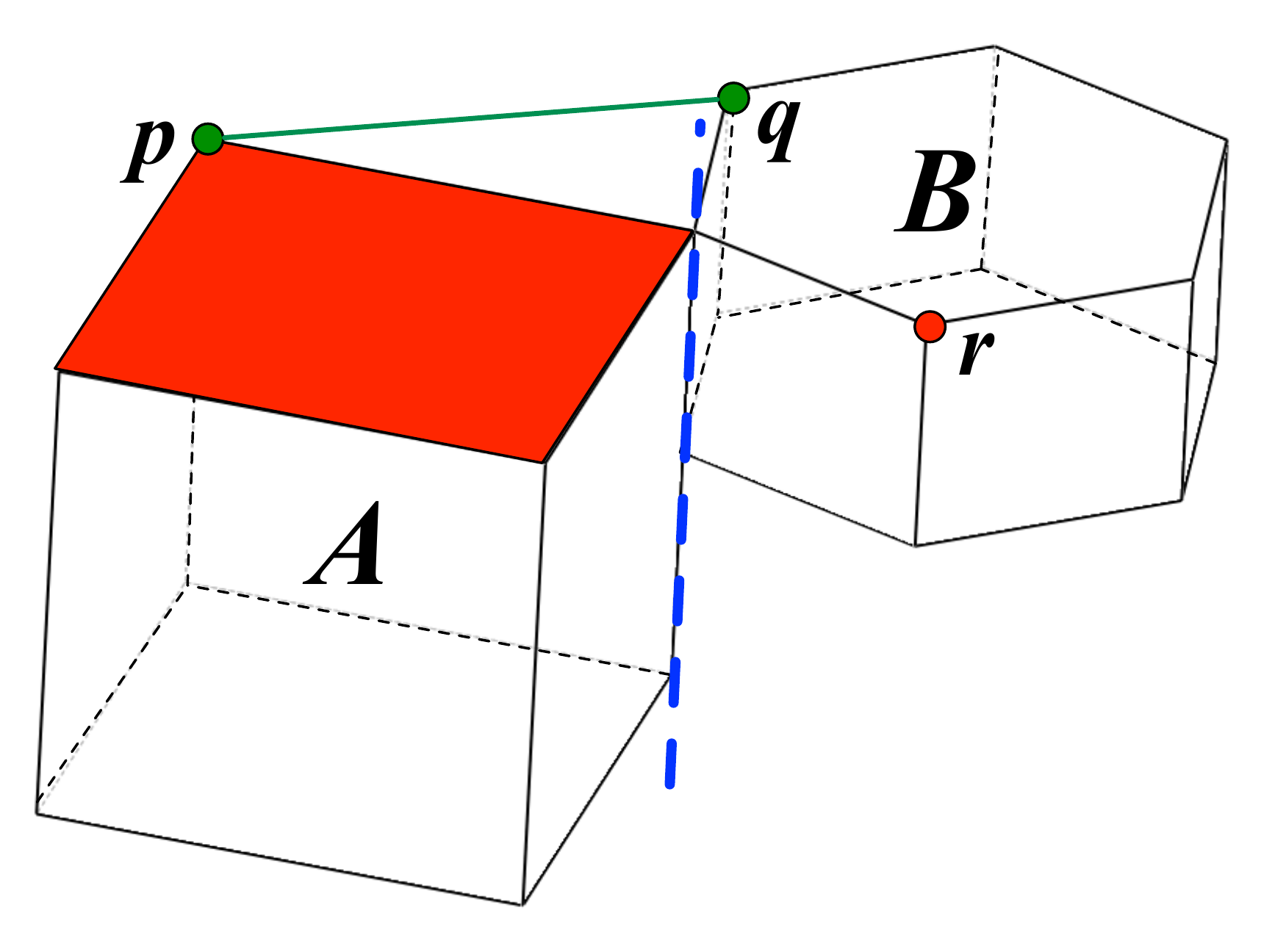}
	\end{minipage}}%
	\hspace{4mm}	
	\centering \subfloat[Removing the point-point distance constraint (iii) results in a {\bf flexible} framework with 1 degree of freedom (rotation about the shared axis).] {\label{fig.flexBodyCadExample}
	\begin{minipage}[b]{.46\linewidth}
	\centering
	\includegraphics[width=.7\linewidth]{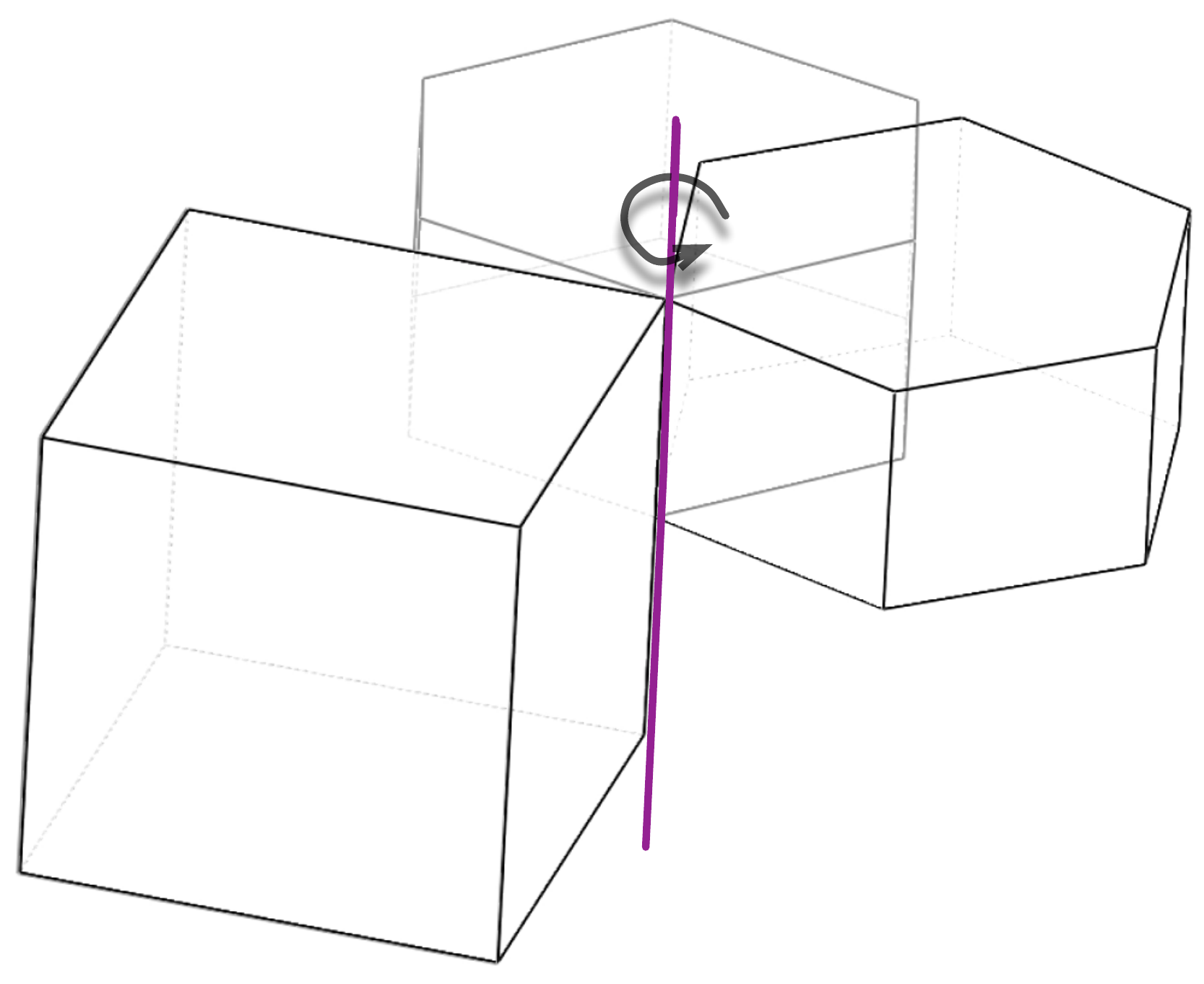}
	\end{minipage}}%
				\caption{A 3D body-and-cad framework.}
				\label{fig.exampleBodyCad}

\end{figure}

\medskip\noindent{\bf Contributions.} 
In this paper, we present a combinatorial characterization of the generic minimal rigidity of a subset of body-and-cad frameworks: we treat 20 of the 21 constraints, omitting point-point coincidences.   The equations governing point-point coincidences exhibit special algebraic behavior (see Appendix \ref{apx:ptPtCoinc}). 

 Body-and-cad frameworks on $n$ bodies without point-point coincidences can be modeled by a graph 
called a $(k,g)$-frame. 
Our main theorem states that generic minimal rigidity of a $(k,g)$-frame is equivalent to a partitioning of the graph's edge set such that each partition consists of the edge-disjoint union of 3 spanning trees.  Combinatorial characterizations of rigidity typically lead to quadratic time algorithms, and we expect this result to have similar implications for the development of efficient algorithms.

\medskip\noindent{\bf Significance.} 
Among the many classes of structures whose rigidity is studied, there are only a handful of cases for which combinatorial characterizations have been proven, with Laman's 2D bar-and-joint \cite{laman:rigidity:1970} and Tay's body-and-bar \cite{tayRigidity} results the most well-known. Furthermore, classical work has focused mainly on distance constraints. To the best of our knowledge, 
{\em ours is the first combinatorial rigidity characterization that simultaneously treats angular and other
constraints}.

\medskip\noindent{\bf Motivation.}
In the 3D ``assembly'' environment of the popular CAD software SolidWorks \cite{solidworks}, 
users place ``mates'' among parts (rigid bodies). The mates specify constraints (e.g., 
distance, angular, coincidences) among {\em geometric elements} (e.g.,
points, lines, planes) identified on the parts. Informative feedback is important as explicit visualization of 
mates is difficult, and mechanical engineers are usually not experts in reasoning about
the underlying geometric constraint systems. 

One of the most difficult situations
arises when a new mate is specified that causes an 
inconsistency in the design. As depicted in Figure \ref{fig.swover},
SolidWorks gives feedback in the form of an alert window 
and highlights a list of previous mates ``overdefining'' the system. This list is intended to 
assist the user in resolving the inconsistency,
but is often too large (sometimes, the entire set of 
mates) to be helprooful.
Current software {\em cannot reliably detect a minimally dependent set of constraints}. 
Indeed, the system in Figure \ref{fig.swover} reveals the severity of software limitations, as
the lift is essentially 2-dimensional. It is composed of two ``scissor'' parts attached
by struts connecting pairs of matching joints. Since the ``scissor'' parts must move identically,
analysis can be reduced to considering a single ``scissor,'' which itself is a planar mechanism.
The development of a complete {\bf rigidity theory} would provide
a rigorous mathematical understanding of these geometric systems.
\begin{figure}[tbh]
	\centering\includegraphics[width=\linewidth]{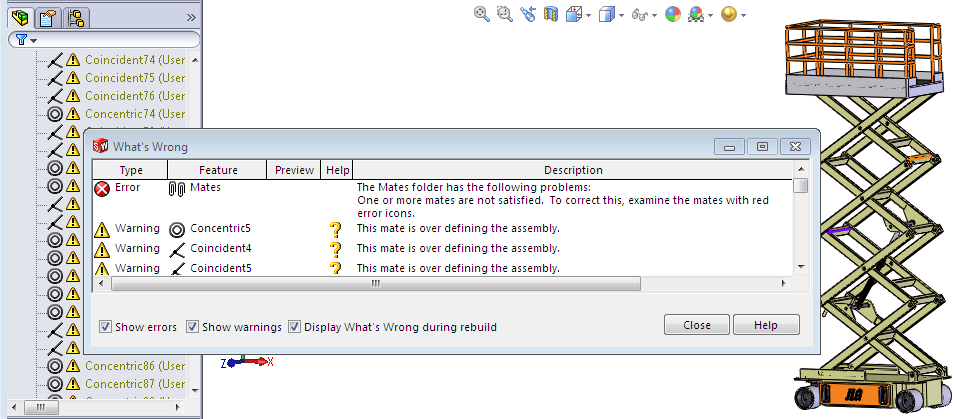}
		\caption{When a new mate is added that results in an inconsistency, previous mates
		``over defining the assembly" are highlighted.  However, the highlighted set is not
		necessarily minimal.   {\em SolidWorks model from {\tt http://www.3dcontentcentral.com}.}}
\label{fig.swover}
\end{figure}

\smallskip\noindent{\bf Potential for algorithmic impact.}
Given the pebble game algorithms for sparsity in \cite{pebblegames}, 
we anticipate our result will lead to efficient approaches for determining the rigidity of body-and-cad
frameworks as well as finding rigid components and detecting minimally dependent sets of constraints. 
{\em Algorithms based on our characterization would provide the capability of giving 
valuable feedback to users designing highly complicated CAD systems.}

\medskip\noindent{\bf Related work.} 
Geometric constraint systems are at the core of constraint-based CAD software and have
been studied from several perspectives. 

In the CAD community, emphasis has been on finding solutions for realizing 
geometric constraint systems. A common approach is to use a {\em decomposition-recombination} scheme based on 
graph algorithms and numerical solvers; see, e.g., \cite{decompSurvey,sitharamSurvey}
for a survey of standard techniques. 
These decomposition schemes often rely on breaking a system into rigid (also called ``well-constrained'')
sub-systems. Difficulties arise even in studying small rigid sub-systems. 
The work of Gao et al. \cite{GaoHoffmannYangLocus04} 
enumerates and analyzes all ``basic configurations'' with up to six geometric
primitives, then presents a method for finding their solutions;
the authors observe that the most challenging configurations included
constraints involving lines (in contrast to those involving
only points or planes).

In classical rigidity theory, bar-and-joint structures composed
of universal joints connected by fixed length bars, i.e., point-point distance constraints, are studied.
For an overview of combinatorial rigidity theory, see the texts \cite{GraverCounting01,comb_rigidity}.

Combinatorial properties of rigidity are usually tied to {\em sparsity} counting conditions. A graph $G$ is $(k,\ell)$-{\em sparse} if the induced subgraph on any subset of $n'$ vertices contains at most $kn'-\ell$ edges.
 Under certain conditions, there are efficient algorithms to detect $(k, \ell)$-sparsity. If $G$ is $(k, \ell)$-sparse and $0 \leq \ell < 2k$, then its edges give rise to a matroid.  Within this matroidal range of values of $k$ and $\ell$, $(k,\ell)$-sparsity is determined
by a family of  pebble game algorithms which run in $O(n^2)$ worst case time \cite{pebblegames}.

In 2D, bar-and-joint rigidity is characterized by Laman's theorem with $(2,3)$-sparsity;
other characterizations followed, some based on decompositions of the edge set into trees 
\cite{lovasz:yemini,Re84,crapo:rigidity:88}. 
However, 3D bar-and-joint rigidity is not well-understood, and a combinatorial characterization 
is arguably the biggest open
problem in rigidity theory. While $(3,6)$-sparsity is necessary, it is not sufficient. Figure \ref{fig.doubleBanana} 
depicts the classical ``double banana'' counterexample; although it satisfies the necessary counts,
the structure is flexible, as the ``bananas'' can rotate about the dotted axis. 
\begin{figure}[tbh]
	\centering\includegraphics[width=.25\linewidth]{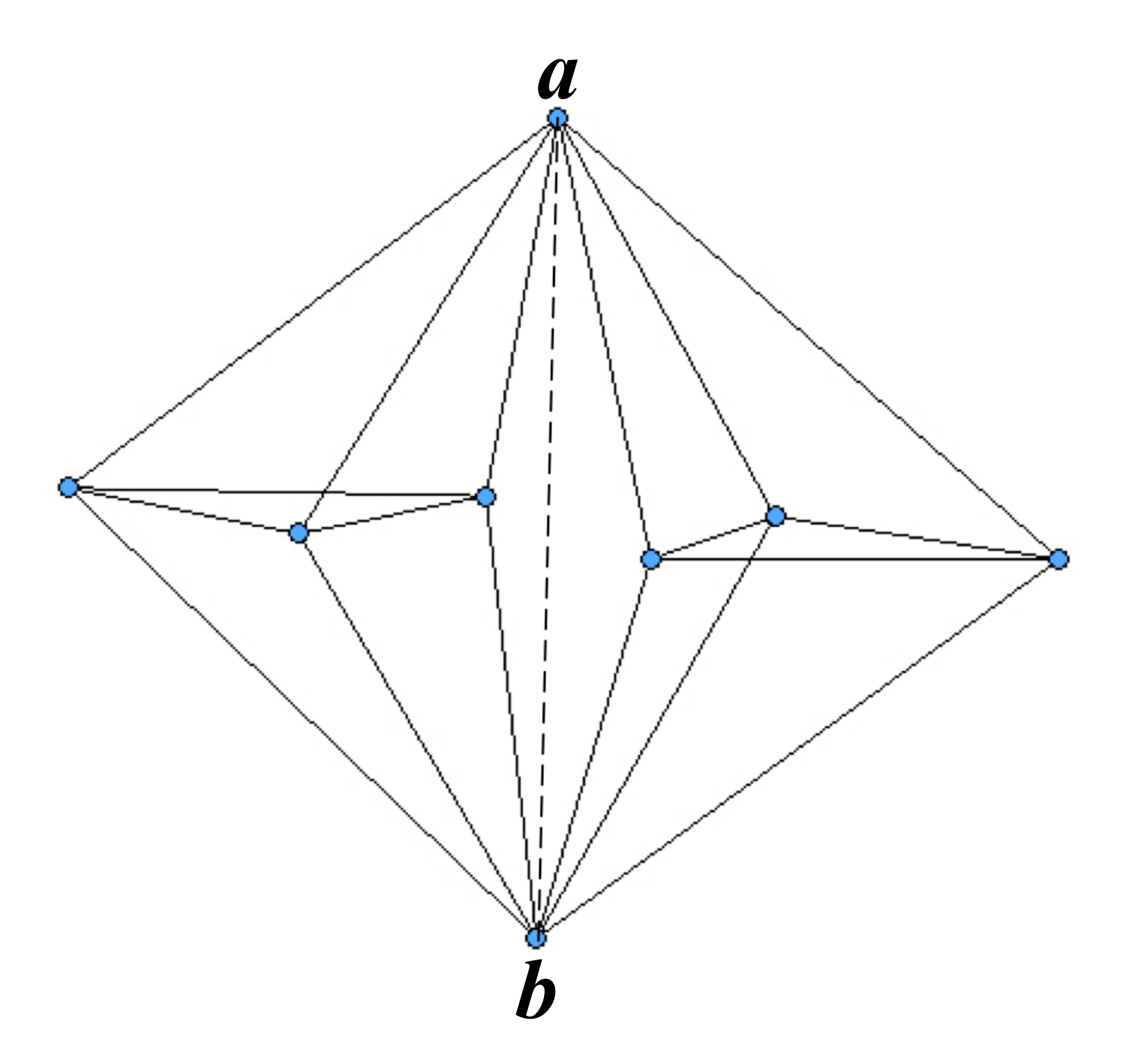}
	\caption{Double banana counterexample shows that $(3,6)$-sparsity is not sufficient for 3-dimensional bar-and-joint rigidity. The structure is flexible, as the ``bananas'' joined at points $a$ and $b$ can rotate about the dotted axis.}
	\label{fig.doubleBanana}
\end{figure}

While bar-and-joint structures in 3D are not well-understood, a closely related structure called the
body-and-bar structure is. 
The body-and-cad constraints that we consider include point-point distance constraints, which are simply bars -- the body-and-bar rigidity model is a special case of body-and-cad. Body-and-bar rigidity was characterized by Tay \cite{tayRigidity}; an alternate proof was given by White and Whiteley \cite{whiteWhiteley}. We generalize White and Whiteley's proof to obtain our result. Tay's theorem \cite{tayRigidity} states that $d$-dimensional body-and-bar rigidity is equivalent to an associated graph consisting of ${d+1 \choose 2}$ edge-disjoint spanning trees; as a consequence of Nash-Williams and Tutte's theorems 
\cite{nashWilliams1961,tutte1961}
on the equivalence of edge-disjoint spanning trees
and sparsity, $d$-dimensional body-and-bar rigidity is characterized by
$({d+1 \choose 2},{d+1 \choose 2})$-sparsity.

Other constraints motivated by CAD applications have been studied in rigidity theory.
Servatius and Whiteley give
a combinatorial characterization of rigidity for frameworks with direction (orientation of the vector defined by
two points with respect to a global coordinate frame) and distance constraints \cite{ServatiusWhiteley99}.
However, angular constraints among points, even in the plane, have proven very challenging; 
in \cite{SaliolaWhiteley04} Saliola and Whiteley
showed that deciding independence of circle intersection angles in the plane has the same complexity as determining
3D bar-and-joint rigidity. Angular constraints between lines and planes or rigid bodies, though, is well-understood
\cite{anglesCCCG09}.

The results from the rigidity theory community have not been directly applied in CAD research
partly because rigidity for CAD systems has not yet been fully characterized, and partly because of issues related to {\em genericity} (discussed in detail in Section \ref{sec:genericity}).  
Note that if a framework is rigid, then the set of all rigid frameworks with the same combinatorics is an open subset of its parameter space.
Michelucci and Foufou \cite{MichelucciFoufouWitness06} 
use this observation and present probabilistic algorithms which rely on analyzing a generic ``witness'' 
which shares the incidence constraints of the original system, but may have different parameters for the distance and angular constraints.
Even in the plane, where Laman's theorem characterizes bar-and-joint rigidity, the restriction of 
genericity poses a subtle challenge. 
The work of Jermann et al \cite{jnt-adg02} observes that coincidence 
constraints, when expressed as distance constraints with value 0, exhibit non-generic behavior;
a similar situation arises for parallel constraints. This is addressed through 
a notion of ``extended structural rigidity''; Jermann et al make assumptions similar to the
genericity conditions for the characterization we present in this paper.

The foundations of {\em infinitesimal body-and-cad rigidity theory} were introduced in \cite{haller:etAl:bodyCad:CGTA:2012}.  From the development of a rigidity matrix, a natural {\em nested sparsity} condition was identified as a necessary, but not sufficient condition for rigidity. 
The example given there highlighted the need for a better understanding of angular constraints.
The need for special treatment of angular constraints was implicitly observed
in \cite{gao:lin:zhang:CTreeDecomposition:2006}, but,
to the best of our knowledge, there has been no other work providing an explicit understanding of their behavior.
With both the flavor of 3-dimensional bar-and-joint and body-and-bar rigidity, the question remained as to where the difficulty of body-and-cad rigidity lay. The results we present here provide an answer for the majority of body-and-cad constraints,
but indicate that point-point coincidences may pose a significant challenge (see Section \ref{sec:point}).

\smallskip\noindent{\bf Structure.}
In Section \ref{sec:prelims}, we provide the preliminary concepts necessary for understanding the foundations
of infinitesimal body-and-cad rigidity and give an example to highlight the distinct behavior of angular constraints. 
In Section \ref{sec:genKGframes}, we develop the rigidity matrix for a more general combinatorial object called a $(k,g)$-frame, which 
is used to prove our main theorem in Section \ref{sec:combChar}. 
In Section \ref{sec:limitations}, we discuss the subtleties and limitations resulting from
genericity assumptions as well as the challenges posed by point-point coincidence constraints.
We conclude with future work in Section \ref{sec:conclusions}.

\section{Preliminaries}
\label{sec:prelims}
For a geometric constraint model, there are three levels of rigidity theory:  {\bf algebraic}, {\bf infinitesimal} and
{\bf combinatorial}. In {\bf algebraic rigidity theory}, a system of equations expressing the geometric constraints is
studied; a solution to this system corresponds to a {\em realization} of the 
structure\footnote{In Appendix \ref{apx:algebraicRT}, we
provide the formal definitions associated with body-and-cad algebraic rigidity theory, 
including the family of ``length'' functions required to describe the geometry of the system.}. 
In {\bf infinitesimal rigidity theory}, the
first-order behavior
of the (usually quadratic) system of equations expresses the constraints in terms of instantaneous motions and
a {\em rigidity matrix}. 
In {\bf combinatorial rigidity theory}, rigidity is defined terms of the rank of this rigidity matrix, so
that {\em generically} rigidity depends only on properties of a certain graph.
{\em In this paper, we are concerned only with infinitesimal and combinatorial
body-and-cad rigidity.}

The {\bf infinitesimal body-and-cad rigidity theory} was first presented in \cite{haller:etAl:bodyCad:CGTA:2012} and
relies on expressing constraints in the Grassmann-Cayley algebra through 
the rows of a {\em rigidity matrix}. For completeness, we provide an overview of the foundations developed in \cite{haller:etAl:bodyCad:CGTA:2012}.

\medskip\noindent{\bf Body-and-cad constraints.}
The body-and-cad frameworks that arise from CAD software are naturally 3-dimensional.
There are 21 coincidence, angular and distance constraints that can be placed between a pair of geometric elements (points, lines or planes) on rigid bodies, which we enumerate below for clarity:
\begin{itemize}
	\item {\bf Point-point constraints.} Coincidence, distance.
	\item {\bf Point-line constraints.} Coincidence, distance.
	\item {\bf Point-plane constraints.} Coincidence, distance.
	\item {\bf Line-line constraints.} Parallel, perpendicular, fixed angle,
	coincidence, distance.
	\item {\bf Line-plane constraints.} Parallel, perpendicular, fixed angle,
	coincidence, distance.
	\item {\bf Plane-plane constraints.} Parallel, perpendicular, fixed angle,
	coincidence, distance.
\end{itemize}

\medskip\noindent{\bf Infinitesimal constraint equations.}   To derive equations from the geometric constraints,
we must first consider instantaneous rigid body motion in $\RR^3$. As a consequence of Chasles' 
Theorem (see, e.g., \cite{selig2005}), any instantaneous rigid body motion may be described by
a {\em twist} (translation and rotation about a specified {\em twist axis}), 
which itself is represented by a 6-vector $\bs = (\bo, \bv)$.
The 3-vector $\bo$ describes the {\em angular velocity}: the direction of the twist axis and rotational
speed about it. The 3-vector $\bv$ can be used to decode the rest of the twist axis and translational speed along it. 

A {\em primitive} constraint between two bodies $i$ and $j$ is encoded by a single homogeneous 
linear equation on the twists $\bs_i = (\bo_i, \bv_i)$ and $\bs_j= (\bo_j, \bv_j)$. 
Each body-and-cad constraint is associated to a number of primitive constraints (which intuitively affect at most
one degree of freedom)\footnote{Four basic constraints, described in more detail in Appendix \ref{apx:fix2rows},
expressed 20 of the constraints; in Appendix \ref{apx:ptPtCoinc}, we discuss why we cannot address point-point constraints with our proof technique.}.
A distinction is made between primitive {\em angular} and {\em blind} constraints: a primitive angular 
constraint may affect only a rotational degree of freedom, while a primitive blind constraint may affect
either a rotational or translational degree of freedom.
Equations corresponding to angular constraints have zeroes in the coordinates corresponding to $\bv$.

The main contribution of \cite{haller:etAl:bodyCad:CGTA:2012} was the algebro-geometric derivation of these equations which 
are collected together in the
rigidity matrix to express the infinitesimal body-and-cad constraints. 
Given a structure with $n$ bodies, the matrix has $6n$ columns; these are
arranged so that the 6 columns for the $i$th vertex correspond
to $(\bv_i, -\bo_i)$\footnote{The re-ordering and negation are technicalities arising from the development 
of the rigidity matrix.}.
{\bf For the combinatorial characterization that we present in this paper, we are 
only concerned with the pattern of non-zero entries in the rigidity matrix}.

Elements of the kernel of  the rigidity matrix may be interpreted as the {\em infinitesimal motions} of the framework. 
If the only infinitesimal motions are {\em trivial}, i.e., assign the same twist to each body, the
framework is {\em infinitesimally rigid}; otherwise, it is {\em infinitesimally flexible}. 
Since we will only work in the infinitesimal rigidity theory, for brevity, we will drop ``infinitesimally'' 
for the remainder of this paper. A {\em minimally rigid} framework is one that is rigid, but becomes
flexible after the removal of any (primitive) constraint.

\smallskip\noindent{\bf Example.}	
	We provide a small example to illustrate the concepts of {\bf primitive} constraints and their
	further separation into {\bf angular} and {\bf blind} constraints. To emphasize that we are
	only concerned with the pattern of zero and non-zero entries, we use $\ast$-entries 
	to indicate values that are generically non-zero.
	
	The body-and-cad framework depicted in Figure \ref{fig.minRigBodyCad}
	is composed of two bodies (a cube and hexagonal prism) sharing three constraints:
	(i) a line-line coincidence, (ii) a point-plane coincidence and (iii) a point-point distance.
	This example is minimally rigid; for instance, removal of the (primitive blind) point-point distance constraint results
	in a flexible structure with one degree of freedom (see Figure \ref{fig.flexBodyCadExample}).
	
	The constraints for this framework are infinitesimally expressed via a 
	rigidity matrix with the pattern depicted in Figure \ref{fig.matrix}.
	Observe that the line-line coincidence constraint is associated to four rows of the matrix; this
	corresponds to four {\bf primitive} constraints. In fact, rows 1 and 2 express a line-line parallel
	constraint and have 0 values in three columns associated with each body; each of these rows is expressing
	a {\bf primitive angular} constraint. The remaining four rows express {\bf primitive blind} constraints.
	\begin{figure}[bth]
		\begin{center}
		\begin{tabular}{ccccc}

			 & $\bv_1$& $-\bo_1$ &$\bv_2$& $-\bo_2$ \\
		\cline{1-5}
		\multirow{4}{25mm}{line-line coincidence}&\multicolumn{1}{|c|}{\cellcolor{red}000}&\multicolumn{1}{c|}{\cellcolor{red}$\ast\ast\ast$}&\multicolumn{1}{c|}{\cellcolor{red}$000$ }&\multicolumn{1}{c|}{\cellcolor{red}$\ast\ast\ast$}\\
		\cline{2-5}
		&\multicolumn{1}{|c|}{\cellcolor{red}000}&\multicolumn{1}{c|}{\cellcolor{red}$\ast\ast\ast$}&\multicolumn{1}{c|}{\cellcolor{red}$000$ }&\multicolumn{1}{c|}{\cellcolor{red}$\ast\ast\ast$}\\
		\cline{2-5}
		&\multicolumn{1}{|c|}{$\ast\ast\ast$ }&\multicolumn{1}{c|}{$\ast\ast\ast$}&\multicolumn{1}{c|}{$\ast\ast\ast$}&\multicolumn{1}{c|}{$\ast\ast\ast$}\\
		\cline{2-5}
		&\multicolumn{1}{|c|}{$\ast\ast\ast$ }&\multicolumn{1}{c|}{$\ast\ast\ast$}&\multicolumn{1}{c|}{$\ast\ast\ast$}&\multicolumn{1}{c|}{$\ast\ast\ast$}\\
		\cline{1-5}
	 	\multicolumn{1}{c|}{point-plane distance}&\multicolumn{1}{c|}{$\ast\ast\ast$ }&\multicolumn{1}{c|}{$\ast\ast\ast$}&\multicolumn{1}{c|}{$\ast\ast\ast$}&\multicolumn{1}{c|}{$\ast\ast\ast$}\\
			\cline{1-5}
		\cline{1-5}
		\multicolumn{1}{c|}{point-point distance}&\multicolumn{1}{c|}{$\ast\ast\ast$ }&\multicolumn{1}{c|}{$\ast\ast\ast$}&\multicolumn{1}{c|}{$\ast\ast\ast$}&\multicolumn{1}{c|}{$\ast\ast\ast$}\\
			\end{tabular}
		\end{center}
		\caption{The pattern of the rigidity matrix for the structure depicted in Figure \ref{fig.minRigBodyCad}.}
		\label{fig.matrix}
	\end{figure}

\medskip\noindent{\bf Novelty.} We emphasize the two main differences between the 
development of the infinitesimal rigidity theory for the body-and-cad \cite{haller:etAl:bodyCad:CGTA:2012} 
and body-and-bar \cite{tayRigidity,whiteWhiteley} models. In the rigidity matrix associated to a {\bf body-and-bar} 
framework, each row corresponds to a bar, or equivalently, a distance constraint between two bodies.  By contrast, 
in order to encode a single constraint in a {\bf body-and-cad} framework, we may need {\em multiple rows} in the 
rigidity matrix; the concept of primitive constraints addresses this by associating exactly one constraint with each row. 

The second difference appears in the separate treatment of primitive angular and blind constraints, 
represented by red and black edges, respectively, in a primitive cad graph, defined below. This was the main challenge
left open by the natural {\em nested sparsity} condition identified as a necessary, but not sufficient, condition
for body-and-cad rigidity. The example given on page 27 of \cite{haller:etAl:bodyCad:CGTA:2012} highlighted the need 
for a better understanding of the interplay of angular and blind constraints.

\subsection{The primitive cad graph}\label{sec:primitive}
The infinitesimal theory of \cite{haller:etAl:bodyCad:CGTA:2012}
lets us associate a {\em primitive cad graph} $H= (V, B \disjointUnion R)$ to each body-and-cad framework. 
Here, $H$ is a multigraph with a vertex for each body and an edge for each primitive constraint.   
Moreover, the edges are partitioned into two sets, a set $B$ of black edges and a set $R$ of red edges, 
representing primitive blind and angular constraints, respectively.  See Figure \ref{fig.minRigBodyCadPrim} for 
the primitive cad graph associated to the minimally rigid framework in Figure \ref{fig.exampleBodyCad}. 
\begin{figure}[bth]	
	\centering
	\includegraphics[width=.3\linewidth]{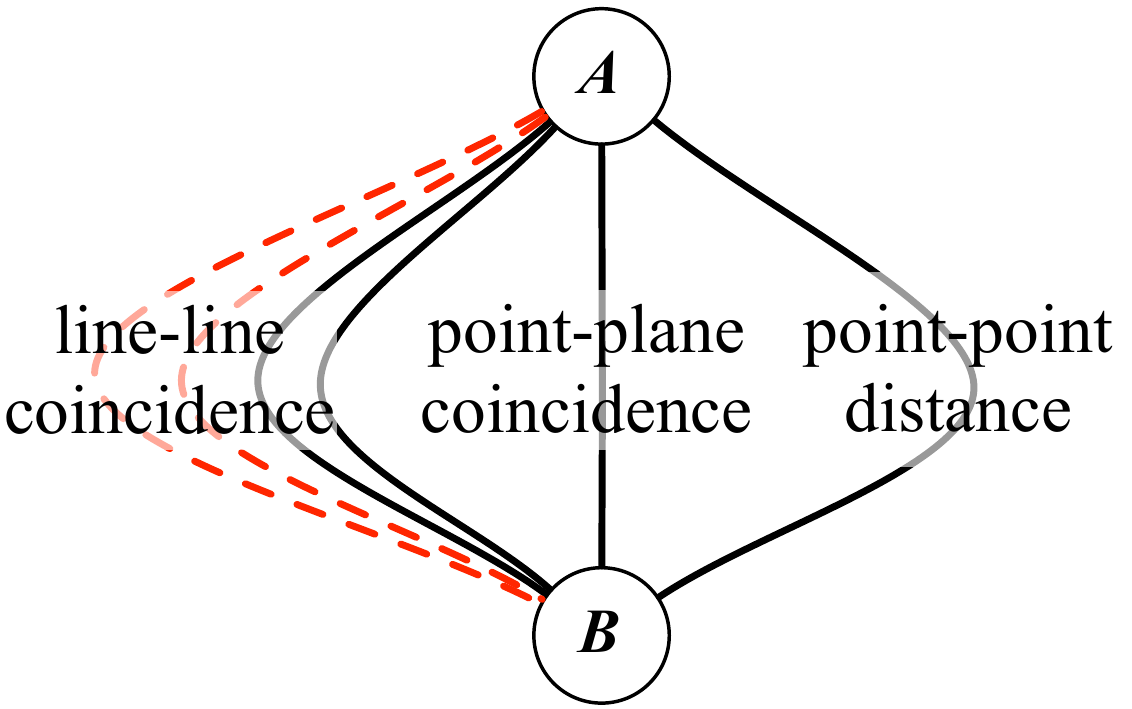}
	\caption{The primitive cad graph associated to the framework from Figure \ref{fig.minRigBodyCad} has 2 red (dashed) edges, representing primitive angular constraints, and 4 black (solid) edges, representing primitive blind constraints.}
	\label{fig.minRigBodyCadPrim}
\end{figure}

Recall that each primitive constraint is expressed as a single linear equation. In fact, this equation can be encoded by a 
6-vector whose entries are derived from the specific geometry of the framework.
We define a {\em primitive cad framework} $H(\bp)$ to be the graph $H$ in which each edge is labeled with this 6-vector. 
 Let $m$ be the number of edges in $H$ and $m_R=|R|$ be the number of red edges. Our main theorem for 3-dimensional body-and-cad rigidity is stated in terms of the primitive cad graph $H$.

\begin{theorem}\label{theorem:combchar}
	A 3-dimensional body-and-cad framework (in which no point-point coincidence constraints are present) is generically minimally rigid if and only if, in its associated primitive cad graph $H = (V, R \disjointUnion B)$, 
	there is some set of black edges $B' \subseteq B$ such that
\begin{enumerate}
\item  $B \backslash B'$ is the edge-disjoint union of $3$ spanning trees, and
\item  $R \cup B'$ is the edge-disjoint union of $3$ spanning trees.
\end{enumerate}
\end{theorem}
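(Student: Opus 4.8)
\medskip\noindent\textbf{Proof strategy.}
The plan is to reduce the statement to a purely combinatorial fact about the $(k,g)$-frame of Section~\ref{sec:genKGframes} (in the case $k=6$, $g=3$) and then argue by matroid union, generalising White and Whiteley's proof of Tay's theorem \cite{whiteWhiteley,tayRigidity}. Write $n=|V|$. Section~\ref{sec:genKGframes} will supply that, for a generic parameter $\bp$, the $m\times 6n$ rigidity matrix of the body-and-cad framework has the same row matroid as the generic $(6,3)$-frame on $H$: the columns fall into $n$ blocks of six (the ``$\bv$'' and ``$-\bo$'' coordinates of a body), each edge $e=ij$ contributes one row carrying a $6$-vector placed with opposite signs in blocks $i$ and $j$, the vector of a red edge is supported on the three ``$-\bo$'' coordinates, and generically these vectors are as general as that support pattern allows. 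Because a row for $e=ij$ is a linear functional of the relative twist $\bs_i-\bs_j$, the $6$-dimensional space of trivial motions always lies in the kernel, so $H(\bp)$ is minimally rigid exactly when the $m$ rows are independent and $m=6n-6$. It is therefore enough to identify the generic row matroid of the $(6,3)$-frame on $H$.

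I would phrase the target through matroid union. Let $\mathcal{G}$ be the graphic matroid of $H$, let $\mathcal{G}_B$ be the graphic matroid of $(V,B)$ with every red edge a loop, and let $\mathcal{N}$ be the union of three copies of $\mathcal{G}$ with three copies of $\mathcal{G}_B$. By the matroid union theorem $\mathcal{N}$ is a matroid on $R\disjointUnion B$ whose independent sets are the $F=F'\disjointUnion F''$ with $F\cap R\subseteq F'$, $F'$ being $(3,3)$-sparse in $H$, and $F''\subseteq B$ being $(3,3)$-sparse. By Nash--Williams/Tutte \cite{nashWilliams1961,tutte1961}, $H$ is a basis of $\mathcal{N}$ with $m=6n-6$ precisely when there is some $B'\subseteq B$ such that $R\cup B'$ and $B\setminus B'$ are each $(3,3)$-tight, that is, each an edge-disjoint union of three spanning trees. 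So the theorem reduces to the claim: \emph{the generic row matroid of the $(6,3)$-frame on $H$ equals $\mathcal{N}$.}

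The inclusion ``rigidity matroid $\subseteq\mathcal{N}$'' is true at every $\bp$. For an edge set $F_0$, choose $F_0^{\bo}\subseteq F_0$ maximal among subsets whose ``$-\bo$''-projections are independent; clearing the $\bo$-parts of the rows of $F_0\setminus F_0^{\bo}$ by row operations from $F_0^{\bo}$ gives $\rk(F_0)\le\rk_{\bo}(F_0)+\rk_{\bv}(\widetilde{F_0\setminus F_0^{\bo}})$, where $\rk_{\bo},\rk_{\bv}$ are the ranks of the submatrices on the ``$-\bo$'' and ``$\bv$'' columns and the tilde marks the resulting $\bv$-supported rows. Now ``$\bo$ constant on each connected component'' lies in the kernel of the $\bo$-projection, so $\rk_{\bo}(F_0)\le 3\rk_{\mathcal{G}}(F_0)$; and those reduced rows lie in the span of the $\bv$-parts of the \emph{black} edges of $F_0$ (red edges are zero on the ``$\bv$'' columns), so $\rk_{\bv}(\widetilde{F_0\setminus F_0^{\bo}})\le 3\rk_{\mathcal{G}_B}(F_0)$. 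Feeding $\rk(F_0)\le 3\rk_{\mathcal{G}}(F_0)+3\rk_{\mathcal{G}_B}(F_0)$ into the matroid-union rank formula yields $\rk(F)\le\rk_{\mathcal{N}}(F)$ for all $F$, so an independent set of rows is independent in $\mathcal{N}$. For the opposite inclusion it is enough to exhibit a single realisation at which $H$ is independent whenever the partition exists. Given edge-disjoint spanning trees $T_1,T_2,T_3$ of $R\cup B'$ and $T_4,T_5,T_6$ of $B\setminus B'$, label every edge of $T_t$ by the standard basis vector $\mathbf{e}_t$, placed in the ``$-\bo$''-block if $t\le 3$ (permissible since all red edges lie in $R\cup B'$) and in the ``$\bv$''-block if $t\ge 4$. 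The rigidity matrix then decouples across the six coordinate slots into the oriented incidence matrices of $T_1,\dots,T_6$, each of rank $n-1$, so its rank is $6n-6=m$; since the relevant maximal minor is a polynomial in the realisation parameters that is nonzero here, the generic $(6,3)$-frame on $H$ is minimally rigid.

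The step I expect to be the real obstacle is not this matroid accounting but the opening reduction carried out in Section~\ref{sec:genKGframes} --- that the generic body-and-cad rigidity matroid genuinely coincides with that of the generic $(6,3)$-frame. Establishing it means working through the basic primitive constraint types and verifying that their geometric parameters are flexible enough that the edge $6$-vectors are generic subject only to the red/black support constraint; this is exactly where point-point coincidences fail, their constraint equations being algebraically special, and hence why they must be excluded. A secondary, familiar difficulty --- handled by the genericity hypotheses of Section~\ref{sec:limitations} --- is the passage from ``generic'' in the sense of algebraically independent coordinates to a Zariski-dense set of honest frameworks.
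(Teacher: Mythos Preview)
Your argument is correct, and the sufficiency direction---labelling the edges of each tree $T_t$ by a fixed basis vector so that the rigidity matrix block-decouples into six signed incidence matrices---is essentially the paper's construction (the paper uses indeterminate vectors $(a_{j,1},\ldots,a_{j,k})$ and the $(k,g)$-fan expansion of the pure condition, but the content is the same).  The necessity direction, however, is genuinely different.  The paper expands $\det M_T(H(\bx))$ by a Laplace expansion taking one column from each vertex block at a time: a nonzero term $\det A_1\cdots\det A_k$ forces each $A_j$ to be (a scaling of) the incidence matrix of a spanning tree, and the vanishing of the first $k-g$ coordinates of red edges forces all red edges into the last $g$ trees.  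You instead bound the rank of any edge set $F_0$ by $3\rk_{\mathcal G}(F_0)+3\rk_{\mathcal G_B}(F_0)$ via projection onto the $\bo$- and $\bv$-columns, and then invoke the matroid-union rank formula.  Your route has the advantage of identifying the generic rigidity matroid \emph{globally} as $\mathcal G^{\vee 3}\vee\mathcal G_B^{\vee 3}$ (not merely characterising its bases), and the rank inequality holds at every realisation, not just generically; this also points directly at pebble-game-style algorithms.  The paper's route, on the other hand, produces the pure condition $C_H(\bx)$ as an explicit polynomial, which is what underwrites the genericity discussion in Section~\ref{sec:genericity}.  Your closing caveat is well placed: the substantive work specific to body-and-cad, namely that the twenty admissible constraint types really do yield rows generic subject only to the red/black support pattern, is exactly what Appendix~\ref{apx:fix2rows} supplies and what fails for point-point coincidences.
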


\smallskip\noindent Body-and-cad frameworks are a special case of the more general $(k,g)$-frames introduced in 
Section \ref{sec:genKGframes}. Theorem \ref{theorem:combchar} follows from a result (Theorem \ref{theorem:minRigIFFTrees}) 
in this more general setting, which we prove in Section \ref{sec:combChar}.

\section{From body-and-cad frameworks to $(k,g)$-frames}
\label{sec:genKGframes}
We begin in Section \ref{sec:rigidityMatrix} by introducing the $(k,g)$-frame, which generalizes both the notion 
of a body-and-cad framework and the $k$-frame of \cite{whiteWhiteley}, and its associated rigidity matrix. 
We then present a pure condition for the minimal rigidity of a $(k,g)$-frame in Section \ref{sec:pureCondition}. 
From the point of view of applications, our result is most interesting for 3-dimensional body-and-cad frameworks; 
however, the techniques do not depend on dimension, so we work in the most general setting. 

\subsection{$(k,g)$-frames}
\label{sec:rigidityMatrix}
The proof of our main result depends only on the pattern of zeroes in the rigidity matrix of a body-and-cad framework, 
and not on their derivation or interpretation.   Thus, we may view the pattern in the rigidity matrix associated to a 
primitive cad framework as an instance of a rigidity matrix associated to a more general $(k,g)$-frame, 
which we introduce below.   We begin with some preliminary combinatorial definitions.
 
\begin{defn} A multigraph $H = (V,E=B \disjointUnion R)$ is a {\em bi-colored graph}, where $V = \{1, \ldots, n\}$ is the vertex set,
and $E$ is a set of $m$ edges, $m_B$ of which are black and $m_R$ of which are red.  For brevity, we use ``graphs'' to refer to multigraphs.
\end{defn} 

We generalize the notion of a $k$-frame in Definition 1.7 of \cite{whiteWhiteley}. 
\begin{defn}
	\label{def:kgframe}
A {\em $(k,g)$-frame} $H(\bp)$ is a bi-colored graph $H = (V, E = B \disjointUnion R)$ 
together with a function $\bp: E \to \RR^k$, where $\bp(r)_j = 0$ for $j =1, \ldots, k-g$ and $r \in R$.  
\end{defn}
Then, primitive cad frames in 3D are simply $(6,3)$-frames. 
The $k$-frames of White and Whiteley are $(k,g)$-frames in which $R$ is empty (hence the value of $g$ is irrelevant). 

Associated to each $(k,g)$-frame is a rigidity matrix which we define below.

\begin{defn}\label{def:rigMatrix}
Given a $(k,g)$-frame $H(\bp),$ we define a matrix $M(H(\bp))$ with $k$ columns per vertex.  
For each edge $e = uv$ with $u<v$, define a row with $\bp(e)$ in the $k$ columns for $u$ and $-\bp(e)$ in the $k$ columns for $v$. 
\end{defn}

Motions assign a vector of length $k$ to each body; for structures in dimension $d$, 
$k = {d+1 \choose d}$, and each vector represents a compatible infinitesimal motion.
\begin{defn}
	A {\em motion} of a $(k,g)$-frame is a vector of length $kn$ that is orthogonal to the
	row space of $M(H(\bp))$. A {\em trivial motion} assigns the same $k$-vector to each body, i.e.,
	has $n$ copies of the same $k$-vector.
\end{defn}
We are concerned with {\em minimal rigidity}, i.e., structures with no redundant constraints. 
\begin{defn}
A $(k,g)$-frame $H(\bp)$ is {\em minimally rigid}\footnote{Also called ``isostatic'' in the literature.}
	if it becomes flexible after the removal of any edge from $H$.
\end{defn}

\begin{rmk} Definitions \ref{def:kgframe} and  \ref{def:rigMatrix} set the pattern of zeroes in the matrix $M(H(\bp)).$
However, the rows of the body-and-cad rigidity matrix presented in \cite{haller:etAl:bodyCad:CGTA:2012} are more specialized.  
In Appendix \ref{apx:fix2rows} we give an alternative derivation of equations for two of the basic body-and-cad constraints 
so that the equations conform to the pattern we have set.
\end{rmk}

\subsection{The pure condition for minimal rigidity}
\label{sec:pureCondition}
As with other classical rigidity results, our characterization relies solely on the {\em combinatorics} of
a body-and-cad framework, which applies \emph{generically}.  In this section we introduce the pure condition 
for the minimal rigidity of a $(k,g)$-frame, generalizing the setup in \cite{whiteWhiteley}.  In fact, the existence of a 
pure condition is what justifies our genericity assumptions, which we discuss in more detail in Section \ref{sec:genericity}.

The definition below generalizes necessary edge-counting conditions for minimal rigidity, taking angular constraints into account.
\begin{defn}
	Let $k$ and $g$ be positive integers.  A bi-colored graph $H$ is {\em $(k,g)$-counted} if 
	\begin{enumerate} 
		\item $m = kn-k$ (count on total edges), and 
		\item $m_R \leq gn-g$ (count on red edges).
	\end{enumerate}
\end{defn}

The matrix $M(H(\bp))$ of a $(k,g)$-counted $(k,g)$-frame is not square
(it has $kn-k$ rows and $kn$ columns). It is straightforward to observe that any
minimally rigid $(k,g)$-frame must have $kn-k$ rows
since the trivial motions will always be in the kernel of $M(H(\bp))$.  If we fix or ``tie down" one of the 
bodies in the framework, then these trivial motions are no longer allowed.  

\begin{defn}
	The {\em basic tie-down} of a $(k,g)$-counted  $(k,g)$-frame is a $k \times kn$ matrix $T(k)$ with the identity matrix
	in the first $k$ columns and 0s everywhere else.

	Given a $(k,g)$-counted $(k,g)$-frame $H(\bp)$, the (square) matrix $M_T(H(\bp))$ is $M(H(\bp))$ with $T(k)$ 
appended to the bottom. 
\end{defn}

Thus, the rigidity of  a $(k,g)$-frame $H(\bp)$ satisfying the necessary counts may be determined by 
computing the determinant of the matrix $M_T(H(\bp))$.

\begin{prop}[Proposition 2.5 from \cite{whiteWhiteley}]
\label{prop:minRigDet}
	A $(k,g)$-counted $(k,g)$-frame $H(\bp)$ is minimally rigid if and only if
	$M_T(H(\bp)) \neq 0$.
\end{prop}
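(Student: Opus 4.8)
The plan is to show that minimal rigidity of a $(k,g)$-counted $(k,g)$-frame is equivalent to $M(H(\bp))$ having full row rank $kn-k$, and then to translate that rank condition into the nonvanishing of $\det M_T(H(\bp))$ by a clean linear-algebra argument about appending the tie-down rows of $T(k)$. First I would establish the easy direction: if $H(\bp)$ is minimally rigid, then since the trivial motions (the $k$-dimensional space spanned by $n$ copies of each standard $k$-vector) always lie in the kernel of $M(H(\bp))$, the kernel has dimension at least $k$; rigidity means it is exactly the trivial motions, so $\rk M(H(\bp)) = kn - k$, which (as $H$ is $(k,g)$-counted with exactly $kn-k$ rows) means $M(H(\bp))$ has full row rank and its rows are linearly independent. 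Minimality is automatic here since removing any row drops the rank and hence enlarges the kernel beyond the trivial motions. Conversely, if $\rk M(H(\bp)) = kn-k$ then the kernel is $k$-dimensional and, containing the $k$-dimensional trivial subspace, equals it, so the frame is rigid; and since each row is then independent of the others, deleting any edge produces a nontrivial motion, giving minimality.

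Next I would connect full row rank of $M(H(\bp))$ to $\det M_T(H(\bp)) \neq 0$. The matrix $M_T(H(\bp))$ is $kn \times kn$: the top $kn-k$ rows are $M(H(\bp))$ and the bottom $k$ rows are $T(k) = [\,I_k \mid 0\,]$. I claim $\det M_T(H(\bp)) \neq 0$ iff $M(H(\bp))$ has full row rank \emph{and} the row space of $M(H(\bp))$ meets the row space of $T(k)$ trivially. The first condition we have already characterized. For the second, the key observation is that the row space of $M(H(\bp))$ is always orthogonal to the trivial motions, so every row of $M(H(\bp))$ lies in the $(kn-k)$-dimensional orthogonal complement $W$ of the trivial-motion subspace $\mathcal{T}$; conversely, I would argue that no nonzero vector in the span of the $T(k)$-rows — i.e., no nonzero vector supported on the first $k$ coordinates — lies in $W$, because such a vector $w=(w_1,0,\dots,0)$ pairs with the trivial motion $(e_i,\dots,e_i)$ to give $\langle w_1,e_i\rangle$, and these all vanish only if $w_1=0$. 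Hence the row space of $T(k)$ always intersects $W$ trivially, and therefore whenever $M(H(\bp))$ has full row rank, its $kn-k$ rows together with the $k$ rows of $T(k)$ span all of $\RR^{kn}$ and are independent, so the determinant is nonzero; if instead $M(H(\bp))$ fails to have full row rank, the $kn$ rows of $M_T$ cannot be independent and the determinant vanishes.

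Combining the two equivalences gives: $H(\bp)$ minimally rigid $\iff$ $\rk M(H(\bp)) = kn-k$ $\iff$ $\det M_T(H(\bp)) \neq 0$, which is the proposition. I would present this by first recording the trivial-motion dimension count as a short lemma (the trivial motions form a $k$-dimensional subspace contained in $\ker M(H(\bp))$), then doing the rank$\Leftrightarrow$determinant step. The main obstacle, such as it is, is being careful about the direction of the orthogonality: motions are defined as vectors orthogonal to the \emph{row} space, so rigidity is a statement about $\dim(\text{row space})$, and one must keep straight that appending $T(k)$ rows "kills" exactly the trivial kernel directions without accidentally being dependent on the existing rows — which is precisely the content of the $W \cap \operatorname{rowspace}(T(k)) = 0$ observation above. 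Everything else is standard rank–nullity bookkeeping, and since the statement is quoted as Proposition 2.5 of \cite{whiteWhiteley}, I would also note that the argument is identical to theirs, the only new point being that the presence of red edges and the parameter $g$ play no role in this particular step — they only enter through the choice of $\bp$, not through the shape of $M_T$.
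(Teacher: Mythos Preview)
Your argument is correct. The paper does not actually supply a proof of this proposition; it simply invokes Proposition~2.5 of \cite{whiteWhiteley} and moves on, so there is nothing to compare against beyond noting (as you already do) that the $(k,g)$-structure plays no role here and the White--Whiteley argument carries over verbatim. Your two-step reduction --- minimal rigidity $\Leftrightarrow$ $\rk M(H(\bp))=kn-k$, and then full row rank $\Leftrightarrow$ $\det M_T(H(\bp))\neq 0$ via the observation that the rows of $M(H(\bp))$ lie in the orthogonal complement of the trivial motions while the rows of $T(k)$ do not --- is exactly the standard proof, and all the details you give (the trivial-motion subspace has dimension $k$, each row of $M$ annihilates it, a nonzero vector supported on the first $k$ coordinates cannot annihilate it) are sound.
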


In fact, since we would like to understand rigidity more generally than for a single set of values for $\bp,$ 
we would like to think of $M_T(H(\bx))$ as a polynomial in indeterminates.  As in \cite{whiteWhiteley}, we make the following definition.

\begin{defn}
Given a $(k,g)$-frame $H(\bp)$, we define the \emph{generic} $(k,g)$-frame $H(\bx)$ on the underlying labeled graph $H$ by 
setting all generically non-zero entries of the $k$-vectors labeling the edges to be algebraically independent 
indeterminates $\bx(b)_j$ for $j=1,\ldots,k$, and $\bx(r)_j$ for $j=k-g+1,\ldots,k$, where $b\in B$ and $r \in R$. 
\end{defn}

We define the \emph{pure condition} to be $C_H(\bx) = \det M_T(H(\bx))$ so that $C_H(\bx)$ is a polynomial 
in $N = km_B+gm_R$ variables. Generic rigidity is expressed in terms of the pure condition.
\begin{defn} 
	\label{defn:genericRigidity}
	A $(k,g)$-counted $(k,g)$-frame $H(\bx)$ is {\em generically minimally rigid} if there exists a function 
	$\bp: E \to \RR^k$ for which $C_H(\bp) \neq 0$ satisfying $\bp(r)_i=0$ for $i = 1, \ldots, k-g$ for all $r \in R$.
\end{defn}

We require one final combinatorial concept for analyzing the pure condition. 
\begin{defn}
	A {\em $(k,g)$-fan} $\phi$ of a $(k,g)$-counted bicolored graph
	is a partitioning of $E$ into $n-1$ ordered sets $(\phi_2, \ldots, \phi_n)$
	such that 
	\begin{enumerate}
		\item each $\phi_i = (\phi_{i,1}, \ldots, \phi_{i,k})$ contains exactly $k$ edges incident to vertex $i$, and
		\item each $\phi_i$ contains $\leq g$ (red) edges from $R$.
	\end{enumerate}
\end{defn}
\noindent We denote
the determinant of the matrix whose rows are $\bx(\phi_{i,1}), \ldots, \bx(\phi_{i,k})$  by $[\phi_i]$. 

\begin{defn}
	Two $(k,g)$-fans $\phi$ and $\phi'$ are called {\em distinct} if there exists a vertex $i$
such that $\phi_i \neq \phi_i'$, as unordered sets. 
\end{defn}
\noindent We can represent distinct $(k,g)$-fans by orienting $H$: if edge $e \in \phi_i,$ orient 
it so that its tail is at vertex $i$.
Then, given a $(k,g)$-fan $(\phi_2,\ldots,\phi_n)$, its {\em $(k,g)$-fan diagram}
	is the oriented multigraph 
	$F = (V,A)$, where $\overrightarrow{ij} \in A$ if and only if $ij \in \phi_i$.
Note that the same $(k,g)$-fan diagram will represent multiple $(k,g)$-fans,
as depicted in Figure \ref{fig:kfan}, but exactly one distinct $(k,g)$-fan. 

\begin{figure}[tbh]
\centering \subfloat[Example $(3,1)$-counted bicolored graph has 9 total edges, 2 of which are red.] {\label{fig:thicket31labeled}
\begin{minipage}[b]{.3\linewidth}
\centering
\includegraphics[width=.7\linewidth]{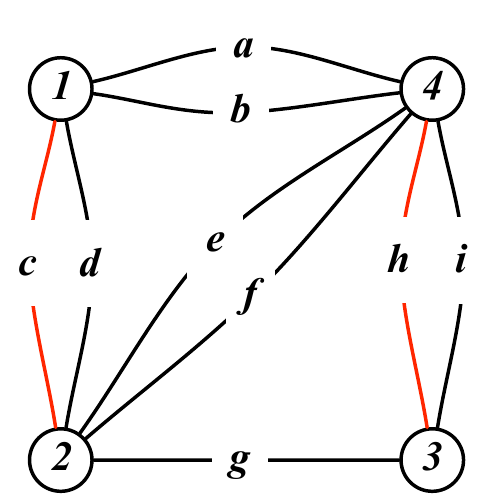}
\end{minipage}}%
\hspace{3mm}
\centering \subfloat[This $(3,1)$-fan diagram corresponds to multiple
$(3,1)$-fans, including: ((c,d,e),(g,h,i),(a,b,f)) and
((e,c,d),(i,g,h),(f,a,b)).
]{\label{fig:kfan}
\begin{minipage}[b]{0.3\linewidth}
\centering\includegraphics[width=.7\linewidth]{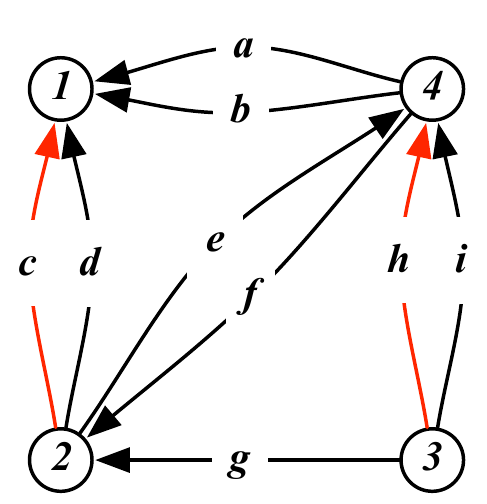}
\end{minipage}}
\hspace{3mm}
\centering \subfloat[Black edges without bolded edge form 2 edge-disjoint 
spanning trees (dotted and dashed); red edges with bolded edge form a single 
spanning tree (solid).]{\label{fig:thicket31trees}
\begin{minipage}[b]{0.3\linewidth}
\centering\includegraphics[width=.65\linewidth]{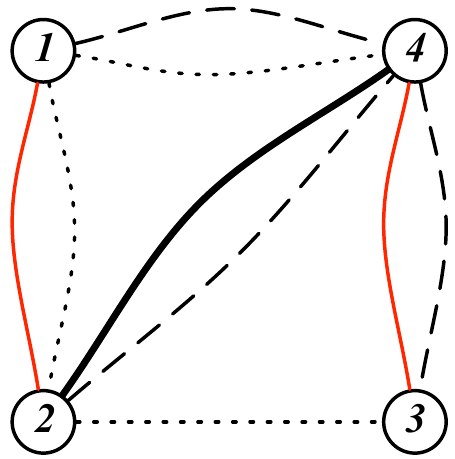}
\end{minipage}}
\caption{An example $(3,1)$-counted graph satisfying the conditions of Theorem \ref{theorem:minRigIFFTrees}. The graph has two red edges: $c$ between vertices 1 and 2, and $h$ between vertices 3 and 4.}
\label{fig:thicket31}
\end{figure}

As a consequence of Proposition 2.12 from \cite{whiteWhiteley}, we obtain the following.  
\begin{prop}
	\label{prop:pureCondKGFan}
	$C_H(\bx) = \Sigma_{\phi}\pm  [\phi_2]\cdots[\phi_n]$, for all distinct $(k,g)$-counted $(k,g)$-fans $\phi$ of $H$.
\end{prop}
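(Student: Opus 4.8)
The plan is to deduce Proposition~\ref{prop:pureCondKGFan} from Proposition~2.12 of \cite{whiteWhiteley} by a direct Laplace-expansion argument on the square matrix $M_T(H(\bx))$, keeping careful track of how the red-edge zero pattern interacts with the expansion. First I would recall the structure of $M_T(H(\bx))$: it has $kn$ rows ($km_B$ nonzero entries coming from black edges, $gm_R$ from red edges, and $k$ rows from the tie-down $T(k)$) and $kn$ columns, grouped into $n$ blocks of $k$ columns, one block per vertex. The tie-down block occupies the $k$ columns of vertex~$1$ and is the identity there. Expanding the determinant by generalized Laplace expansion along the $n$ column-blocks, a nonzero term corresponds to a way of assigning to each edge-row a vertex-block in which it has support, such that exactly $k$ rows land in each block; the tie-down forces the rows assigned to vertex~$1$'s block to be precisely the $k$ tie-down rows (any other assignment gives a zero minor or is cancelled). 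What remains is exactly a partition of the $kn-k$ edge-rows into $n-1$ groups of $k$, the $i$-th group consisting of $k$ edges incident to vertex~$i$ (for $i=2,\dots,n$), together with the choice of which endpoint each edge is charged to — i.e.\ precisely a $(k,g)$-fan together with its ordering data, except that condition~(2) of a $(k,g)$-fan (at most $g$ red edges per group) has not yet appeared.

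Next I would show that condition~(2) is automatic: if some group $\phi_i$ contains more than $g$ red edges, then the corresponding $k\times k$ block-minor $[\phi_i]$ has more than $g$ rows whose first $k-g$ coordinates vanish (by Definition~\ref{def:kgframe}), so those rows lie in a $g$-dimensional coordinate subspace and are linearly dependent, forcing $[\phi_i]=0$. Hence only $(k,g)$-fans contribute, and the determinant becomes $\sum_\phi \pm [\phi_2]\cdots[\phi_n]$ summed over $(k,g)$-fans $\phi$. The final bookkeeping step is to pass from ``all $(k,g)$-fans'' to ``all \emph{distinct} $(k,g)$-fans'': two fans with the same unordered groups $\{\phi_i\}$ but different orderings of the $k$ edges within some group contribute minors $[\phi_i]$ that differ by the sign of a row permutation, while the Laplace sign attached to the term changes by exactly the same sign, so these terms coincide. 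Thus grouping the sum by distinct $(k,g)$-fans — equivalently, by $(k,g)$-fan diagrams together with a choice of which distinct fan they encode — collapses it to $C_H(\bx)=\sum_\phi \pm[\phi_2]\cdots[\phi_n]$ over distinct $(k,g)$-fans, which is the claim. Here I would explicitly invoke Proposition~2.12 of \cite{whiteWhiteley} for the sign-consistency (the signs are well-defined up to a global sign, which is all that is asserted) so that I do not have to re-derive the sign conventions from scratch.

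The main obstacle I anticipate is the sign/multiplicity bookkeeping in the last step: making precise that each distinct $(k,g)$-fan is counted once with a well-defined sign, after quotienting by within-block reorderings, and checking that the red-edge coordinate constraint does not interfere with the correspondence between Laplace terms and fans (it only kills terms, never creates or merges them). Since \cite{whiteWhiteley} already establishes the analogous statement for ordinary $k$-frames (the $R=\emptyset$ case), the honest content here is just verifying that their argument goes through verbatim once one checks (a) the tie-down still pins down vertex~$1$'s block, and (b) groups violating the red count contribute zero — both of which are short. So I would structure the write-up as: restate the White--Whiteley setup, observe $(6,3)$- (or $(k,g)$-)frames differ only by forced zeros in red rows, note the two checks (a) and (b), and conclude that Proposition~2.12 of \cite{whiteWhiteley} applies with the sum restricted to distinct $(k,g)$-fans.
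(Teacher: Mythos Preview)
Your proposal is correct and follows essentially the same approach as the paper: a Laplace expansion along the vertex column-blocks of $M_T(H(\bx))$, with the tie-down pinning vertex~$1$'s block, followed by the observation that any block containing more than $g$ red rows has vanishing minor. Your treatment is somewhat more explicit than the paper's on the passage from ``all fans'' to ``distinct fans'' and on sign consistency, but the argument is the same.
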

\begin{proof}
	If we compute $\det M_T(H(\bx))$ via a Laplace expansion on the first $k$ columns of $M_T(H(\bx))$, we see that the determinant
	is actually the determinant of the $(kn-k) \times (kn-k)$ matrix formed by deleting the first $k$ columns and last $k$ rows.  We can in
	turn compute the determinant of this submatrix via a Laplace expansion taking the $k$ columns of each vertex at a time.  This requires
	that we compute the determinant of matrices formed by partitioning the rows of our matrix so that $k$ rows are associated to each vertex $2, 
	\ldots, n.$  Each row corresponds to an edge, and if we have a submatrix containing more than $g$ rows associated to red edges, then the 
	corresponding determinant is zero.  Therefore, $C_H(\bx)$ is the sum of terms of the form  $\pm [\phi_2]\cdots[\phi_n]$, where $\phi=(
	\phi_2, \ldots, \phi_n)$ is a distinct $(k,g)$-fan.  The sign of each term is determined by the rules for Laplace expansion, the permutation $\phi$,
	and whether the vectors used in the Laplace expansion are $\bx(e)$ or $-\bx(e).$	
\end{proof}

\section{A combinatorial characterization of minimal rigidity}
\label{sec:combChar}
Given the setup of \cite{haller:etAl:bodyCad:CGTA:2012} and the previous sections, our main result 
may be stated and proved in a purely combinatorial setting. Theorem~\ref{theorem:minRigIFFTrees} and its proof 
are a generalization of the body-and-bar characterization theorem of White and Whiteley.
\begin{theorem}
	\label{theorem:minRigIFFTrees}
	A $(k,g)$-counted $(k,g)$-frame $H(\bx)$ is generically minimally rigid if and only if there exists some set of
	black edges $B' \subseteq B$ such that
	\begin{enumerate}
		\item $B \setminus B'$ is the edge-disjoint union of $k-g$ spanning trees, and
		\item $R \cup B'$ is the edge-disjoint union of $g$ spanning trees.
	\end{enumerate}
\end{theorem}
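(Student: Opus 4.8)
The plan is to reduce the statement to the non-vanishing of the pure condition $C_H(\bx)$, and then to read off the coefficients of that polynomial by expanding $\det M_T(H(\bx))$ in a way that decouples the $k$ coordinate directions; this is a streamlined $(k,g)$-version of White and Whiteley's argument. By Definition~\ref{defn:genericRigidity} (and Proposition~\ref{prop:minRigDet}), $H(\bx)$ is generically minimally rigid exactly when $C_H(\bx)=\det M_T(H(\bx))\not\equiv 0$, so everything reduces to deciding when this polynomial is identically zero.

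The key step is a multilinear expansion of $\det M_T(H(\bx))$ in the edge-rows. By Definitions~\ref{def:kgframe} and~\ref{def:rigMatrix}, the row of an edge $e=uv$ is $\sum_j \bx(e)_j\,\rho_{e,j}$, where $\rho_{e,j}$ is the $\{0,\pm1\}$-row carrying $+\mathbf{e}_j$ in the block of $u$ and $-\mathbf{e}_j$ in the block of $v$, the sum running over $j=1,\dots,k$ for $e\in B$ and over $j=k-g+1,\dots,k$ for $e\in R$; the tie-down rows carry no indeterminates. Distinct coordinate functions $c\colon E\to\{1,\dots,k\}$ give distinct monomials $\mu_c=\prod_{e\in E}\bx(e)_{c(e)}$, so multilinearity yields
\[
C_H(\bx)=\sum_{c}\bigl(\det M_T^{(c)}\bigr)\,\mu_c ,
\]
where $M_T^{(c)}$ is the matrix whose rows are the $\rho_{e,c(e)}$ for $e\in E$ together with the tie-down $T(k)$, and the sum runs over the $c$ with $c(r)\in\{k-g+1,\dots,k\}$ for every $r\in R$. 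Thus $C_H\not\equiv 0$ if and only if $\det M_T^{(c)}\neq 0$ for some such admissible $c$.

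Next I compute $\det M_T^{(c)}$. Regrouping columns by coordinate rather than by vertex, $M_T^{(c)}$ is block diagonal with $k$ blocks, the $l$-th having $|c^{-1}(l)|+1$ rows and $n$ columns; since the whole matrix is $kn\times kn$, it can be invertible only if every block is square, forcing $|c^{-1}(l)|=n-1$ for all $l$. Granting that, the $l$-th block is the signed incidence matrix of the subgraph $(V,c^{-1}(l))$ with the row $\mathbf{e}_1^{\top}$ adjoined, and this is invertible exactly when $(V,c^{-1}(l))$ is a spanning tree (for a spanning tree the incidence matrix has rank $n-1$ and $\mathbf{e}_1^{\top}$ lies outside its row space, so the rank becomes $n$; a set of $n-1$ edges that is not a tree spans a disconnected graph whose block has rank $<n$). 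Hence $\det M_T^{(c)}\neq 0$ iff every colour class $c^{-1}(l)$ is a spanning tree, i.e.\ $E$ is an edge-disjoint union of $k$ spanning trees all of whose red edges lie in the $g$ classes $c^{-1}(k-g+1),\dots,c^{-1}(k)$. Setting $B'=B\cap\bigl(c^{-1}(k-g+1)\cup\cdots\cup c^{-1}(k)\bigr)$ turns this into exactly conditions~(1) and~(2) of the theorem, and, conversely, concatenating the tree lists coming from a given $B'$ produces such a $c$. This proves both implications at once.

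The step that does the real work is the evaluation of $\det M_T^{(c)}$: recognizing that a colour class of the wrong size, or one containing a cycle, annihilates the determinant is precisely what forces the edge-disjoint spanning tree decomposition, so this is where the block-diagonalization bookkeeping (including the non-square case) must be done carefully. It is also what supplants the term-cancellation argument one would otherwise have to push through the fan expansion of Proposition~\ref{prop:pureCondKGFan}. Two smaller points also need checking: that the monomials $\mu_c$ are pairwise distinct, and that the red-edge zero pattern of Definition~\ref{def:kgframe} forces $c(r)>k-g$ — the latter being exactly what guarantees that the recovered $B'$ lies in $B$ and that the first $k-g$ colour classes automatically avoid $R$. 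The genuinely delicate phenomenon in body-and-cad rigidity, the interplay of angular (red) and blind (black) constraints, has already been absorbed into the $(k,g)$-frame and the zero pattern of its rigidity matrix; granted that abstraction, the combinatorics above is short.
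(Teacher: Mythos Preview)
Your proof is correct. The multilinear expansion in the edge-rows, the observation that distinct coordinate functions $c$ yield distinct monomials (since the $\bx(e)_j$ are algebraically independent), and the block-diagonalisation of $M_T^{(c)}$ after regrouping columns by coordinate are all sound; the identification of each block with a signed incidence matrix augmented by $\mathbf{e}_1^{\top}$, and the spanning-tree criterion for its invertibility, are exactly right.

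Your route differs from the paper's in a useful way. For the forward direction the two arguments are close cousins: the paper does a Laplace expansion of $\det A$ along the column blocks indexed by coordinate $j$, obtaining terms $\det A_1\cdots\det A_k$ indexed by row partitions, whereas you expand multilinearly in the rows and index terms by the function $c$; these are dual descriptions of the same decomposition, and your explicit ``distinct monomials'' remark is what the paper uses tacitly when it passes from $\det A\neq 0$ to ``some term is nonzero''. The genuine difference is in the reverse direction. The paper constructs a specific evaluation $\bx'$ (assigning the same vector to every edge of a given tree $T_j$) and then invokes the $(k,g)$-fan expansion of Proposition~\ref{prop:pureCondKGFan} to argue that exactly one fan survives. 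You bypass both the specific evaluation and the fan machinery entirely: having already identified the coefficient of $\mu_c$ as $\det M_T^{(c)}$, you simply observe that the $c$ built from the given tree decomposition makes that coefficient nonzero. This buys you a single symmetric argument for both implications and renders Proposition~\ref{prop:pureCondKGFan} unnecessary for the proof of the theorem; the paper's approach, on the other hand, tracks White and Whiteley more closely and keeps the fan expansion available as an independent tool.
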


\noindent Figure \ref{fig:thicket31trees} depicts an example of a $(3,1)$-counted bi-colored graph
satisfying the conditions of Theorem \ref{theorem:minRigIFFTrees}.

\begin{proof}
$(\Longrightarrow)$ 
Assume $C_H(\bx) \neq 0$.
Consider a Laplace expansion along the last $k$ rows of $M_T(H(\bx))$. Since the only non-zero entries in these rows
occur in the $k \times k$ identity matrix appearing in the tie-down in the first $k$ columns, $\det M_T(H(\bx)) =\det A$, 
where $A$ is the submatrix of $M_T(H(\bx))$ formed by the first $kn-k$ rows and last $kn-k$ columns. 
Since $C_H(\bx) = \det M_T(H(\bx))$, then $\det A \neq 0$.

Now consider a Laplace expansion of $\det A$ using $(n-1)\times (n-1)$-minors so that $\det A$ is a sum of 
terms $\mu =  \det A_1\cdots \det A_k,$ where for $j = 1,\ldots, k$, $A_j$ is an $(n-1)\times (n-1)$ submatrix 
of $A$ using the $j$th columns associated to each of the $n-1$ remaining vertices and some choice of $n-1$ rows.  
If $\det A$ is non-zero, then some term $\mu= \det A_1\cdots \det A_k$ is non-zero, which implies that 
$\det A_1, \ldots, \det A_k$ are all non-zero.  Each submatrix $A_j$ has one column per vertex, and the rows of 
$A_j$ are just the rows of the incidence matrix of an orientation on the edges of $H$ multiplied by non-zero scalars.  
Since $\det A_j$ is non-zero, $A_j$ is the incidence matrix of a subgraph of $H$ with $n-1$ edges and no cycles.  
Hence, $A_j$ describes a spanning tree $T_j$ of $H.$

If $r$ is a red edge, then $\bx(r)_j=0$ for $j = 1,\ldots,k-g$.  Since $\det A_j$ is non-zero for all $j$,  
the edge $r$ must be in one of the trees $T_{k-g+1}$,\ldots,$T_k$. Let $B'\subseteq B$ be the set of black edges 
in $T_{k-g+1},\ldots,T_k$.  Then $B \setminus B'$ is the edge-disjoint union of $T_1,\ldots,T_{k-g}$
and $R \cup B'$ is the edge-disjoint union of the $g$ remaining spanning trees.

\medskip
$(\Longleftarrow)$ 
Let $H(\bx)$ be the generic $(k,g)$-frame, where $H = (V,E = B \disjointUnion R)$ is a 
$(k,g)$-counted graph,
and let $B' \subseteq B$ be a set of edges such that 
\begin{enumerate}
	\item $B \setminus B'$ is the edge-disjoint union of $k-g$ spanning trees $T_1,\ldots,T_{k-g}$,
	\item $R \cup B'$ is the edge-disjoint union of $g$ spanning trees $T_{k-g+1},\ldots,T_k$.
\end{enumerate} 
We show that $\det M_T(H(\bx))$ is not identically zero.  To do this it suffices to show that there exists a 
function $\bx':E \to \RR^k$ such that $\det M_T(H(\bx')) \neq 0$. For $j = 1, \ldots, k$ and $i = 1, \ldots, k$, 
let $a_{j,i}$ denote an indeterminate.  For each $j$ and $e \in T_j$ let 
$$\bx'(e) = \left\{ \begin{array}{ll} 
						(a_{j,1},a_{j,2},\ldots,a_{j,k}) & \hbox{if } e \in B \setminus B' \\
						(0,0,\ldots,0,a_{j,k-g+1},\ldots,a_{j,k}) & \hbox{if } e \in R \cup B'						
						\end{array}\right .$$
						
By Proposition \ref{prop:pureCondKGFan}, $C_H(\bx) = \Sigma_{\phi}\pm \Pi_{i=2}^n [\phi_i]$. We claim that the only non-zero
term in the pure condition stems from a single distinct $(k,g)$-fan $\phi$.  Root each tree at vertex 1.  Each tree has a 
unique path from vertex $i$ to the root, so we can orient all edges toward the root.  Define $\phi_i$ to be the ordered set 
of all edges incident to vertex $i$ pointing toward the root, where the $j$th edge in $\phi_i$ is the edge in $T_j.$   
By construction, each $\phi_i$ has at most $g$ red edges, so $[\phi_i]\neq 0.$ 

Consider another distinct $(k,g)$-fan $\phi' \neq \phi$. There must be some vertex $i$ such that $\phi'_i$ contains at least two
edges $e_1$ and $e_2$ from the same tree $T_j$. Hence, $\bx'(e_1) = \bx'(e_2)$
and $[\phi'_i] = 0$, and we conclude that $\phi$ is the only fan for which $\Pi_{i=2}^n [\phi_i]$ is non-zero. 
\end{proof}

We now prove Theorem \ref{theorem:combchar}, which gives a combinatorial characterization for 3-dimensional body-and-cad rigidity (omitting point-point coincidence constraints).

\begin{proof}[Proof of Theorem \ref{theorem:combchar}]
The result follows from Theorem \ref{theorem:minRigIFFTrees} as a primitive cad graph for a body-and-cad framework
with no point-point coincidence constraints is a $(k,g)$-frame with $k=6$ and $g=3.$
\end{proof}

\section{Considerations and challenges}
\label{sec:limitations}
Before concluding, we discuss practical implications of the genericity assumptions that result from the pure
condition, then provide context as to why point-point coincidence constraints pose a challenge.

\subsection{Genericity}
\label{sec:genericity}
Like all combinatorial characterizations of rigidity, our main theorem gives
a condition for {\em generic} minimal rigidity. This notion of genericity is subtle as there are currently
no geometric conditions implying or characterizing it (even in the most well-studied bar-and-joint rigidity model). 
Common assumptions in computational geometry, such as the general position of joint coordinates, are not strong enough to ensure the 
genericity of a bar-and-joint structure, and we have additional types of constraints to consider. Thus it is difficult to test when a 
given framework satisfies the genericity assumptions needed for the combinatorial analysis.

We defined generic rigidity using the pure condition (see Definition \ref{defn:genericRigidity}) and now
discuss why this definition is appropriate. Recall that $N = km_B+gm_R$. The set of all points $\bq \in \RR^N$ with $C_H(\bq) = 0$ 
is a closed subset that either has dimension less than $N$ or is all of $\RR^N$.  Consequently, the set of points $\bq \in \RR^N$ 
such that $C_H(\bq) \neq 0$ is open.  This implies that, if there exists one $\bq \in \RR^N$ such that $C_H(\bq) \neq 0,$ then the 
pure condition is non-zero for all points in some small neighborhood of $\bq$.   Moreover, if $C_H(\bq) \neq 0$ for some $\bq,$ 
then the set of points with $C_H(\bq)=0$ is a \emph{proper} closed subset of $\RR^N$ with measure zero, and $C_H(\bq')\neq 0$ 
\emph{generically}, i.e., for almost all values $\bq' \in \RR^N.$

Therefore, Definition \ref{defn:genericRigidity} can be viewed as an extension of Tay's notion of generic rigidity \cite{tayRigidity}
 to $(k,g)$-frames.  A $(k,g)$-frame $H(\bx)$ is generically minimally rigid if and only if $C_H(\bx)$ is not identically zero. 
If, however, $C_H(\bx)$ is not the zero polynomial and $C_H(\bp)=0$ for some $\bp \in \RR^N$, then 
$H(\bp)$ is (infinitesimally) flexible. We refer to $\bp$ as a {\em non-generic} point, and say that we have a non-generic 
embedding of the $(k,g)$-frame $H(\bp)$. If the polynomial $C_H(\bx)$ is identically zero, then $H(\bx)$ is {\em generically flexible}.

\smallskip\noindent{\bf A non-generic framework: Pappus's Theorem}\\
As observed by Jermann et al \cite{jnt-adg02}, there are inherent problems in analyzing the geometry of a specific realization of a 
framework using only the combinatorics of a graph, as a particular embedding or framework may fail to be generic.  

This problem is straightforward to observe in the analysis of body-and-bar frameworks,
in which all constraints are point-point distance constraints.  
For example, consider a graph with vertices $v_1$ and $v_2$ joined by 6 edges; 
this is the combinatorial model for two rigid bodies joined by 6 bars.  
If the bars are attached at points that are chosen generically, then the framework is rigid, matching
the combinatorial analysis provided by Tay's Theorem  
\begin{figure}[bth]
	\centering \subfloat[The 6 points $A,B,C,a,b,c$ in the non-generic position from the hypothesis of Pappus's Theorem.] {\label{fig:pappusDiag}
	\begin{minipage}[b]{.46\linewidth}
	\centering
	\includegraphics[width=\linewidth]{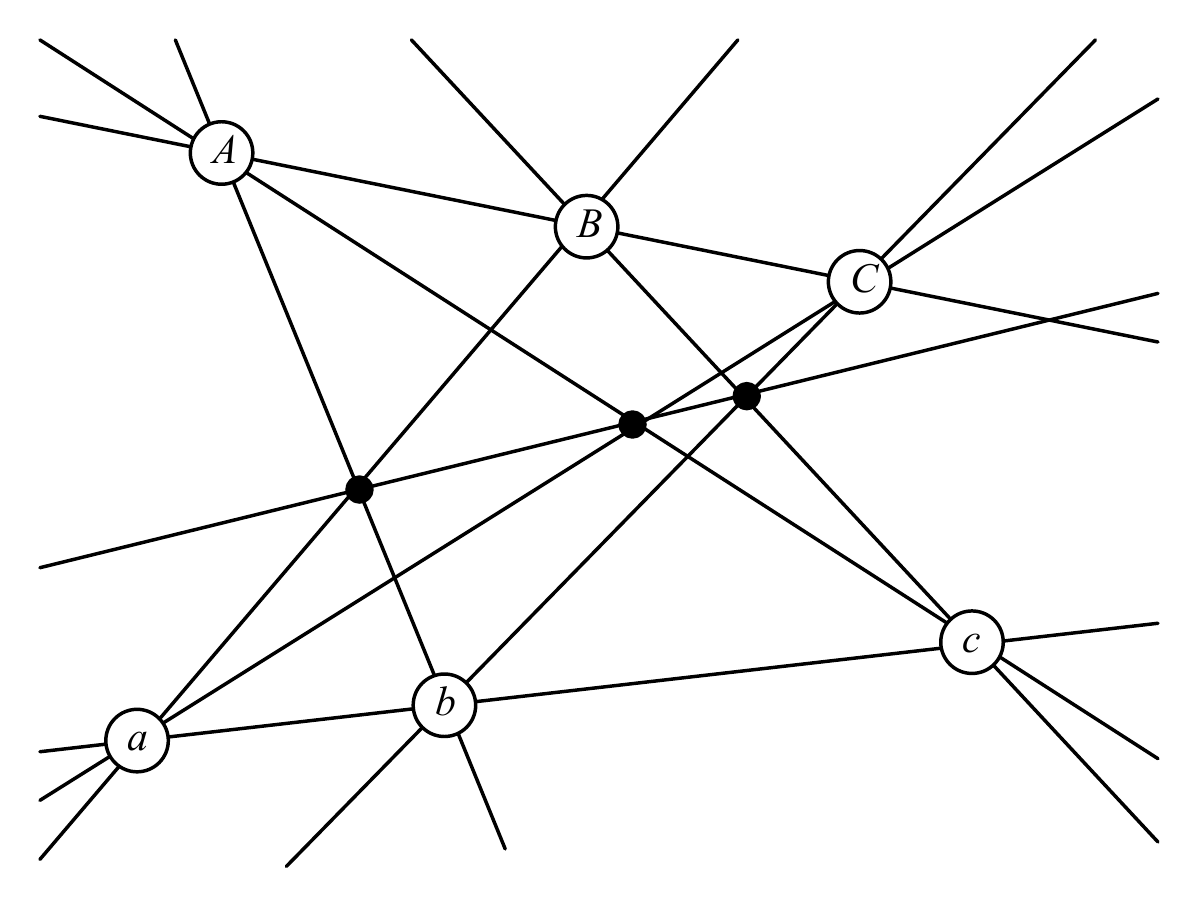}
	\end{minipage}}%
	\hspace{4mm}
	\centering \subfloat[Circular vertices are associated to the bodies from points and square vertices are associated to the bodies from lines. 
	Vertices are connected if there is a point-line coincidence between the corresponding rigid bodies. There are two edges for each 
	point-line constraint in the primitive cad graph, but we have drawn only one to aid readability. The vertex labeled $\cap$ {\em line} 
	represents the line $\overline{p_{ab}p_{bc}}$.] {\label{fig:pappusCad}
	\begin{minipage}[b]{.46\linewidth}
	\centering
	\includegraphics[width=\linewidth]{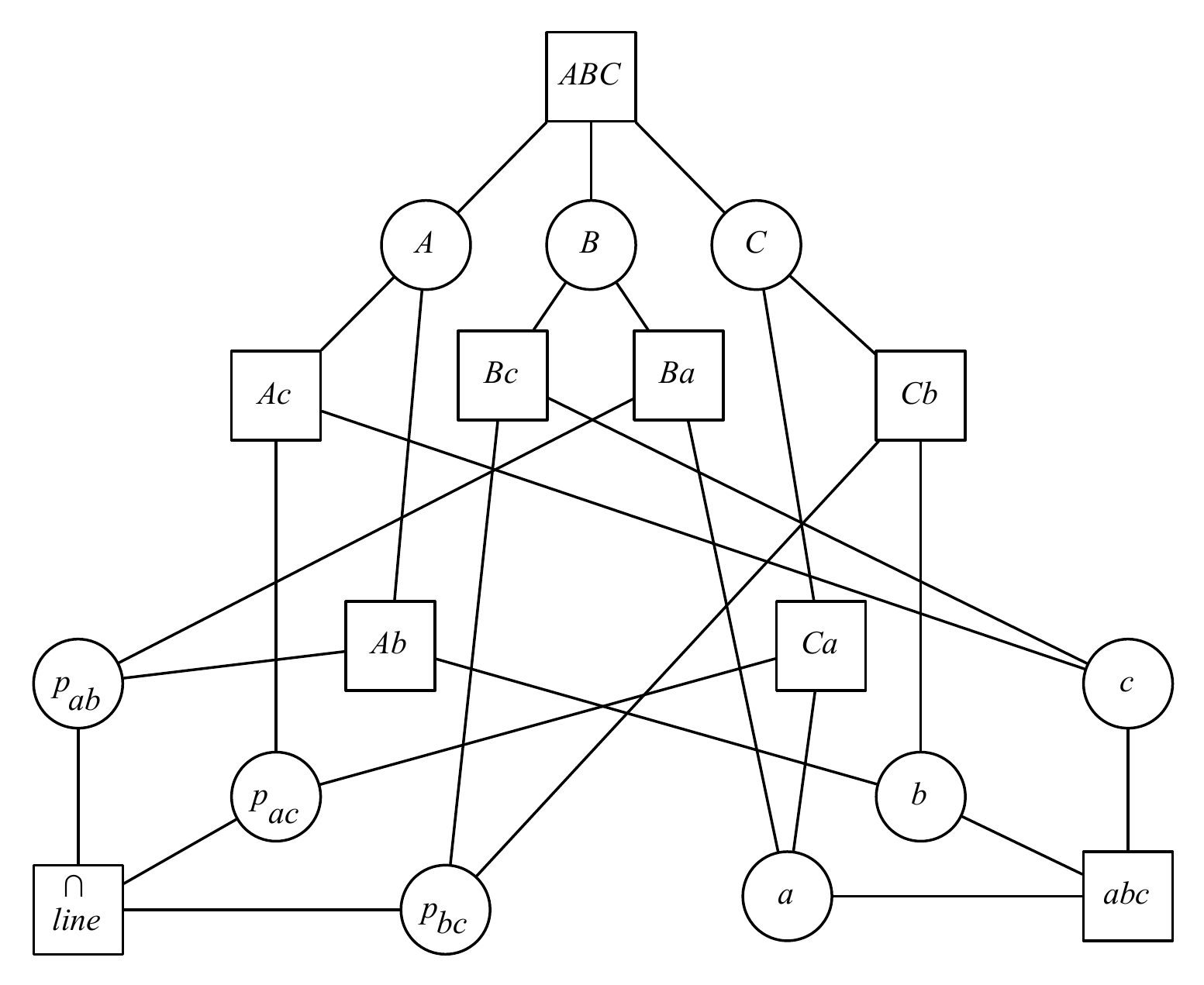}
	\end{minipage}}	
	\caption{The point-line coincidences described by Pappus's Theorem.}
	\label{fig.pappus}
\end{figure}
\cite{tayRigidity}. However, if the points are not sufficiently generic -- for example if all 6 bars attach to a 
single point on one of the bodies -- then the framework is flexible; for this {\em non-generic} embedding,
Tay's theorem does not apply. Intuitively, ``reusing'' the same point results in a dependency
that is not encountered in the generic case.

The constraints in a body-and-cad framework present similar challenges in applying combinatorial
analysis. To illustrate this, we present a body-and-cad framework along with 
a non-generic embedding, modeling Pappus' Theorem, in which a generically independent constraint becomes dependent. 

\begin{theorem}[Pappus] (See Figure \ref{fig:pappusDiag}.)
Let $A,B,C$ and $a,b,c$ be two sets of 3 collinear points in the plane.  Then the 3 points $p_{ab} = \overline{Ab} \cap \overline{Ba}$,  $p_{ac} = \overline{Ac} \cap \overline{Ca}$, and $p_{bc} = \overline{Bc} \cap \overline{Cb}$ are collinear.
\end{theorem}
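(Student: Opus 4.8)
The plan is to prove Pappus's statement projectively and reduce it to a short linear-algebra computation. First I would work in the projective plane, representing each of the six points $A,B,C,a,b,c$ by homogeneous coordinate vectors in $\RR^3$: the line through two points $P,Q$ is the cross product $P\times Q$, the meet of two lines is again a cross product, and three points $P,Q,R$ are collinear precisely when the bracket $[PQR]=\det[P\,Q\,R]$ vanishes. Since every hypothesis and the conclusion is projectively invariant, I may apply a projective transformation sending the line through $p_{ab}$ and $p_{ac}$ to the line at infinity; this is legitimate provided $p_{ab}\neq p_{ac}$, a condition that fails only in a degenerate configuration handled separately. In the resulting affine picture $p_{ab}$ and $p_{ac}$ are points at infinity, i.e.\ $\overline{Ab}\parallel\overline{Ba}$ and $\overline{Ac}\parallel\overline{Ca}$, and the assertion becomes the affine statement that $\overline{Bc}\parallel\overline{Cb}$.

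Next I would put $A$ at the origin and coordinatize the two given lines $\overline{ABC}$ and $\overline{abc}$ so that $B,C$ are scalar multiples of one direction vector and $a,b,c$ are affine points of the second line; the two parallelism hypotheses then become two linear relations among these scalars. A direct elimination shows those two relations force the third parallelism $\overline{Bc}\parallel\overline{Cb}$ (equivalently, this last step can be read off from a chain of similar triangles at the common vertex $A$). Transporting the conclusion back along the projective transformation, $p_{bc}$ lies on the line at infinity of the affine picture, which is exactly the line $\overline{p_{ab}p_{ac}}$; hence $p_{ab}, p_{ac}, p_{bc}$ are collinear.

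A more computational alternative, closer to the Grassmann--Cayley formalism used elsewhere in the paper, is to skip the normalization: write $p_{ab}=(A\times b)\times(B\times a)$ and likewise for $p_{ac}$ and $p_{bc}$, then expand $\det[p_{ab}\,p_{ac}\,p_{bc}]$ with the meet identity $(P\times Q)\times(R\times S)=[PRS]\,Q-[QRS]\,P$; after substituting the collinearity relations $C=\lambda A+\mu B$ and $c=\lambda' a+\mu' b$, the expansion collapses to zero. The main obstacle in either route is bookkeeping rather than ideas. In the affine route it is dispatching the degenerate cases --- the two lines coinciding, two of the six points coinciding, or $p_{ab}, p_{ac}, p_{bc}$ not pairwise distinct --- where collinearity is either vacuous or needs a limiting argument. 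In the bracket route it is carrying out the Cayley factorization carefully enough that the dependence on $[ABC]$ and $[abc]$ emerges cleanly, which requires choosing compatible expansions of the three meets.
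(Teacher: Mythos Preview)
Your proposal is a correct and standard outline of a proof of Pappus's Theorem: both the projective-normalization route (sending the line $\overline{p_{ab}p_{ac}}$ to infinity and reducing to an affine parallelism statement) and the bracket/Grassmann--Cayley route (expanding $[p_{ab}\,p_{ac}\,p_{bc}]$ via the meet identity and using the collinearity hypotheses $[ABC]=[abc]=0$) are well-known arguments, and you have identified the right bookkeeping issues in each.

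However, there is nothing to compare against: the paper does not prove Pappus's Theorem. It is quoted there as a classical result of projective geometry, with no proof given or intended; its role is solely to furnish a concrete example of a body-and-cad framework whose specific geometry forces a dependency that the combinatorial (generic) analysis of Theorem~\ref{theorem:combchar} cannot see. So your work here is extra, not a match or a mismatch with anything in the paper.
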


To model Pappus's Theorem, we use a framework with 18 bodies: 9 to model the points,
and another 9 to model the two lines $\overline{ABC}$ and $\overline{abc}$, the six lines 
$\overline{Ab},\overline{Ac},\overline{Ba},\overline{Bc},\overline{Ca},\overline{Cb}$, and the
line $\overline{p_{ab}p_{bc}}$.  If we encode all of the point-line coincidences depicted in 
diagram (a) in Figure \ref{fig.pappus}, then the point-line coincidence between
$p_{ac}$ and $\overline{p_{ab}p_{bc}}$ is implied by the other constraints as a consequence of Pappus's Theorem.
Thus, there is a \emph{dependency} among the geometric constraints which should correspond to a dependency in the rigidity matrix.

However, a combinatorial analysis of the associated primitive cad graph (depicted schematically in Figure \ref{fig:pappusCad})
based on Theorem \ref{theorem:combchar} indicates that the constraints are {\em generically}
independent. Indeed, we can create a generic embedding of the same primitive cad graph
in which each constraint uses unique geometric elements: 
for each point-line coincidence, identify a distinct point and distinct line on the pair
of constrained bodies. If the 27 points and 27 lines are sufficiently generic,
then the associated rigidity matrix has full rank, and the constraints are generically independent.

\subsection{Point-point coincidence constraints} \label{sec:point}
The result given in this paper does not address point-point coincidence constraints.  In addition to 
the algebraic difficulties (described in Appendix \ref{apx:ptPtCoinc}) that prohibit the inclusion of point-point coincidence constraints
in the definition of a $(k,g)$-frame, point-point coincidences seem inherently more challenging to address.  

It may be that characterizing 3D point-point coincidences is just as challenging as 
understanding 3D bar-and-joint rigidity. Indeed, we can construct an
analogue of the classical ``double banana" example with a body-and-cad framework 
that has two rigid bodies joined by two point-point coincidence constraints. The two rigid bodies are free to rotate about the line 
joining their two points of intersection (labeled $\vec a$ and $\vec b$ in Figure \ref{fig.doubleBanana}), 
so the body-and-cad framework is flexible.  

In this example, each point-point coincidence constraint corresponds to three primitive blind constraints (see Appendix \ref{apx:ptPtCoinc}), 
and the primitive cad graph contains 2 vertices and 6 black edges between them. 
A natural extension of Theorem \ref{theorem:combchar} would characterize generic minimal rigidity with a partitioning of the edge set into 
6 edge-disjoint spanning trees (there are no red edges). While this property is necessary, it is not sufficient.
The primitive cad graph for the ``double banana'' satisfies the condition, but is flexible, thus serving as a counterexample.
Therefore, a combinatorial characterization of \emph{body-and-cad} frameworks including point-point coincidences must differ 
from the characterization for body-and-cad frameworks presented here.

\section{Conclusions and Future Work}
\label{sec:conclusions}
We have presented a combinatorial characterization of body-and-cad rigidity of
structures using 20 of the 21 pairwise constraints; we exclude point-point coincidence
constraints. This characterization is expressed in terms of the edge set of an
associated graph being partitioned into two sets of 3 edge-disjoint
spanning trees. 

\medskip\noindent{\bf Future work and open problems.}
The discussion in Section \ref{sec:genericity} highlights
a practical, but notoriously difficult, problem in rigidity theory: determine geometric necessary or
sufficient conditions for a specific embedding of a body-and-cad framework to be generic.

We suspect that analyzing point-point coincidences in body-and-cad frameworks may be just as hard as analyzing 3D 
bar-and-joint rigidity.  Understanding precisely where the challenge lies merits a comprehensive investigation, and the relationship between 
2D body-and-cad and 2D bar-and-joint structures is a natural starting point.

\medskip\noindent{\bf Algorithms.}
The combinatorial property of Theorem \ref{theorem:minRigIFFTrees} must be matroidal 
as it characterizes the independence of rows in a matrix. 
Due to the intimate relation between edge-disjoint spanning trees and sparsity counts,
we expect to obtain efficient algorithms for body-and-cad rigidity by generalizing the
pebble game algorithms. Such algorithms would not only {\bf decide} whether a given framework is rigid or flexible,
but should be able to {\bf detect rigid components and circuits}, minimally dependent
sets of (primitive) constraints. This would provide valuable feedback to CAD users when the addition 
of a constraint causes an inconsistency.

\bibliography{bodyCad.bib}

\newpage
\appendix
\section{Alternative development of body-and-cad rigidity matrix}
	\label{apx:fix2rows}
	In \cite{haller:etAl:bodyCad:CGTA:2012}, a body-and-cad rigidity matrix was developed by expressing 
	each of 20 possible 3-dimensional constraints in terms of 
	four basic constraints:	\\
	\noindent {\em (i) basic line-line non-parallel fixed angular}, \\ 
	\noindent {\em (ii) basic line-line parallel}, \\
	\noindent {\em (iii) basic blind orthogonality}, and \\
	\noindent {\em (iv) basic blind parallel}.\\
	Point-point coincidence constraints required a separate development, and the resulting rows do not have the 
	pattern of generically non-zero entries required for the $(k,g)$-frames in this paper.
	
	Constraints $(i)$ and $(ii)$ are {\em angular} constraints, as the corresponding twists have $\bv = 0.$
	Each basic constraint corresponds to either one or two rows in the rigidity matrix.
	
	The {\bf basic angular} constraint $(i)$ requires that the angle between two vectors $\ba$ and
	$\bb$ be maintained.   This constraint corresponds to a single row in the rigidity matrix of the form
	\begin{center}
	\begin{tabular}{ccccccc}
	$\cdots$ & $\bv_i$& $-\bo_i$ &
	$\cdots$ & $\bv_j$& $-\bo_j$ & $\cdots$ \\
	\hline
	\multicolumn{1}{|c|}{$\rmfill$} & 
		\multicolumn{1}{c|}{\cellcolor{red}${\bf 0}$} & 
		\multicolumn{1}{c|}{\cellcolor{lightgray}$\bb \times \ba$}&
		\multicolumn{1}{c|}{$\rmfill$} &
		\multicolumn{1}{c|}{\cellcolor{red}${\bf 0}$} & 
		\multicolumn{1}{c|}{\cellcolor{lightgray}$\ba \times \bb$}&	
		\multicolumn{1}{c|}{$\rmfill$} \\
	\hline.
	\end{tabular}
	\end{center}
	Constraint $(ii)$ requires a line to be parallel to a line in a fixed direction $\bc = (c_1, c_2, c_3).$  This condition translates into two rows:	\begin{center}
	\begin{tabular}{ccccccc}
	$\cdots$ & $\bv_i$& $-\bo_i$ &
	$\cdots$ & $\bv_j$& $-\bo_j$ & $\cdots$ \\
	\hline
	\multicolumn{1}{|c|}{$\rmfill$} & 
	\multicolumn{1}{c|}{\cellcolor{red}${\bf 0}$} & 
	\multicolumn{1}{c|}{\cellcolor{lightgray}$(-c_2, c_1, 0)$}&
		\multicolumn{1}{c|}{$\rmfill$} &
		\multicolumn{1}{c|}{\cellcolor{red}${\bf 0}$} & 
		\multicolumn{1}{c|}{\cellcolor{lightgray}$(c_2, -c_1, 0)$}&	
		\multicolumn{1}{c|}{$\rmfill$} \\
	\hline
	\multicolumn{1}{|c|}{$\rmfill$} & 
	\multicolumn{1}{c|}{\cellcolor{red}${\bf 0}$} & 
		\multicolumn{1}{c|}{\cellcolor{lightgray}$(0, -c_3, c_2)$}&
		\multicolumn{1}{c|}{$\rmfill$} &
		\multicolumn{1}{c|}{\cellcolor{red}${\bf 0}$} & 
		\multicolumn{1}{c|}{\cellcolor{lightgray}$(0, c_3, -c_2)$}&	
		\multicolumn{1}{c|}{$\rmfill$} \\
	\hline
	\end{tabular}
	\end{center} 

	The reverse direction of the proof of Theorem~\ref{theorem:minRigIFFTrees} requires us to assume that,	
	except for the vectors $\bv$ that are required to be zero in angular constraints, all other entries of
	the rigidity matrix are generically non-zero.    Therefore, this representation of constraint $(ii)$ is
	incompatible with the proof.
	
	However, since $(ii)$ is expressing a line-line parallel constraint, we have an alternate description.
	Let $\ba$ and $\bb$ be two directions such that
	$\ba \times \bb = \bc$. Then constraint $(ii)$ can be expressed using two rows, each of which
	reduce to basic constraint $(i)$:	
	\begin{center}
	\begin{tabular}{ccccccc}
	$\cdots$ & $\bv_i$& $-\bo_i$ &
	$\cdots$ & $\bv_j$& $-\bo_j$ & $\cdots$ \\
	\hline
	\multicolumn{1}{|c|}{$\rmfill$} & 
		\multicolumn{1}{c|}{\cellcolor{red}${\bf 0}$} & 
		\multicolumn{1}{c|}{\cellcolor{lightgray}$\bc \times \ba$}&
		\multicolumn{1}{c|}{$\rmfill$} &
		\multicolumn{1}{c|}{\cellcolor{red}${\bf 0}$} & 
		\multicolumn{1}{c|}{\cellcolor{lightgray}$\ba \times \bc$}&	
		\multicolumn{1}{c|}{$\rmfill$} \\
	\hline
		\multicolumn{1}{|c|}{$\rmfill$} & 
			\multicolumn{1}{c|}{\cellcolor{red}${\bf 0}$} & 
			\multicolumn{1}{c|}{\cellcolor{lightgray}$\bc \times \bb$}&
			\multicolumn{1}{c|}{$\rmfill$} &
			\multicolumn{1}{c|}{\cellcolor{red}${\bf 0}$} & 
			\multicolumn{1}{c|}{\cellcolor{lightgray}$\bb\times \bc$}&	
			\multicolumn{1}{c|}{$\rmfill$} \\
		\hline
	\end{tabular}
	\end{center}

	For the {\bf basic blind} constraints, let
	$\bp$ be a point, $\bp'$ its instantaneous velocity resulting from a twist and $\bc = (c_1,c_2,c_3)$
	a direction vector. Then $(iii)$ expressed that $\bp'$ must be orthogonal to $\bc$,
	and was associated to a single row 	
	\begin{center}
	\begin{tabular}{ccccccc}
	$\cdots$ & $\bv_i$& $-\bo_i$ &
	$\cdots$ & $\bv_j$& $-\bo_j$ & $\cdots$ \\
	\hline
	\multicolumn{1}{|c|}{$\rmfill$} & 
		\multicolumn{2}{c|}{\cellcolor{lightgray}$(\bp : 1) \vee (\bc : 0)$}&
		\multicolumn{1}{c|}{$\rmfill$} &
		\multicolumn{2}{c|}{\cellcolor{lightgray}$-(\bp : 1) \vee (\bc : 0)$}&	
		\multicolumn{1}{c|}{$\rmfill$} \\
	\hline.
	\end{tabular}
	\end{center}
	Constraint $(iv)$ expressed that $\bp'$ must be parallel to $\bc$
	and was associated to two rows
	\begin{center}
	\begin{tabular}{ccccccc}
	$\cdots$ & $\bv_i$& $-\bo_i$ &
	$\cdots$ & $\bv_j$& $-\bo_j$ & $\cdots$ \\
	\hline
	\multicolumn{1}{|c|}{$\rmfill$} & 
		\multicolumn{2}{c|}{\cellcolor{lightgray}$(\bp : 1) \vee (c_2,-c_1,0,0)$}&
		\multicolumn{1}{c|}{$\rmfill$} &
		\multicolumn{2}{c|}{\cellcolor{lightgray}$-(\bp : 1) \vee (c_2,-c_1,0,0)$}&	
		\multicolumn{1}{c|}{$\rmfill$} \\
	\hline
		\multicolumn{1}{|c|}{$\rmfill$} & 
			\multicolumn{2}{c|}{\cellcolor{lightgray}$(\bp : 1) \vee (0,c_3,-c_2,0)$}&
			\multicolumn{1}{c|}{$\rmfill$} &
			\multicolumn{2}{c|}{\cellcolor{lightgray}$-(\bp : 1) \vee (0,c_3,-c_2,0)$}&	
			\multicolumn{1}{c|}{$\rmfill$} \\
		\hline
	\end{tabular}
	\end{center}

	As in constraint $(ii)$, in constraint $(iv)$ we are requiring that two vectors,
	$\bp'$ and $\bc$ be parallel. Again, there is an alternate description.  Indeed, let $\ba$ and $\bb$ be two directions such that
	$\ba \times \bb = \bc$. Then constraint $(iv)$ can be expressed using two rows, each of which
	reduces to basic constraint $(iii)$:
	\begin{center}
	\begin{tabular}{ccccccc}
	$\cdots$ & $\bv_i$& $-\bo_i$ &
	$\cdots$ & $\bv_j$& $-\bo_j$ & $\cdots$ \\
	\hline
	\multicolumn{1}{|c|}{$\rmfill$} & 
		\multicolumn{2}{c|}{\cellcolor{lightgray}$(\bp : 1) \vee (\ba : 0)$}&
		\multicolumn{1}{c|}{$\rmfill$} &
		\multicolumn{2}{c|}{\cellcolor{lightgray}$-(\bp : 1) \vee (\ba : 0)$}&	
		\multicolumn{1}{c|}{$\rmfill$} \\
	\hline
		\multicolumn{1}{|c|}{$\rmfill$} & 
			\multicolumn{2}{c|}{\cellcolor{lightgray}$(\bp : 1) \vee (\bb : 0)$}&
			\multicolumn{1}{c|}{$\rmfill$} &
			\multicolumn{2}{c|}{\cellcolor{lightgray}$-(\bp : 1) \vee (\bb : 0)$}&	
			\multicolumn{1}{c|}{$\rmfill$} \\
		\hline
	\end{tabular}
	\end{center}
	
	We conclude that, generically, the entries in the three $\bo$ columns for {\bf basic angular} 
	constraints $(i)$ and $(ii)$ are non-zero, and the entries in the six $\bv$ and $\bo$ columns for {\bf basic blind}
	constraints $(iii)$ and $(iv) $are non-zero. Therefore, the body-and-cad rigidity matrix resulting
	from this alternative development will satisfy the definition of a $(6,3)$-frame (as per Definition \ref{def:kgframe}).
	
\section{Point-point coincidence constraints}
\label{apx:ptPtCoinc}
In this section, we provide some insight as to why we are unable to treat point-point coincidence constraints.  
If $\bp = (p^x, p^y, p^z)$ is the point of coincidence, then the constraint is infinitesimally expressed by 3 
rows in the rigidity matrix of the form:
\begin{center}
\begin{tabular}{ccccccc}
\hline
\multicolumn{1}{|c|}{$\rmfill$} & 
	\multicolumn{2}{c|}{\cellcolor{lightgray}$(1, 0, 0, 0, -p^z, p^y)$}&
	\multicolumn{1}{c|}{$\rmfill$} &
	\multicolumn{2}{c|}{\cellcolor{lightgray}$(-1, 0, 0, 0, p^z, -p^y)$}&	
	\multicolumn{1}{c|}{$\rmfill$} \\
\hline
\multicolumn{1}{|c|}{$\rmfill$} & 
	\multicolumn{2}{c|}{\cellcolor{lightgray}$(0, 1, 0, p^z, 0, -p^x)$}&
	\multicolumn{1}{c|}{$\rmfill$} &
	\multicolumn{2}{c|}{\cellcolor{lightgray}$(0, -1, 0, -p^z, 0, p^x)$}&	
	\multicolumn{1}{c|}{$\rmfill$} \\
\hline
\multicolumn{1}{|c|}{$\rmfill$} & 
	\multicolumn{2}{c|}{\cellcolor{lightgray}$(0, 0, 1, -p^y, p^x, 0)$}&
	\multicolumn{1}{c|}{$\rmfill$} &
	\multicolumn{2}{c|}{\cellcolor{lightgray}$(0, 0, -1, p^y, -p^x, 0)$}&	
	\multicolumn{1}{c|}{$\rmfill$} \\
\hline
\end{tabular}
\end{center}
Since these rows do not conform to the pattern required by our proof technique,
point-point coincidences are not addressed by the characterization.

\section{Body-and-cad algebraic rigidity theory}
\label{apx:algebraicRT}
For completeness, we include the definitions from \cite{haller:etAl:bodyCad:CGTA:2012} required to express the geometry of
a body-and-cad structure, from which the algebraic theory is derived.
A \emph{cad graph} $(G,c)$ is used to represent the combinatorics of the
constraints,  where
$G = (V,E)$ is a multigraph with $V = \{1,\ldots,n\}$ and
$c: E \rightarrow C$ is an edge coloring using colors $C = \{c_1, \ldots, c_{21}\}$
to denote the 21 possible cad constraints.
This is related to the {\em primitive cad graph} presented in Section \ref{sec:primitive}:
for each color $c_i$, we associate some number of primitive angular and primitive blind
constraints as found in Table \ref{table:constraintAssoc} (reproduced from \cite{haller:etAl:bodyCad:CGTA:2012}).
The bi-colored primitive cad graph $H = (V,E=B \disjointUnion R)$ has a red edge for each
primitive angular constraint and a black edge for each primitive blind constraint.

\begin{table}[bht]
\begin{center}
	\begin{tabular}{|l||c|c|c|c|c|c|}
\hline  & \multicolumn{2}{c|}{\bf point} & \multicolumn{2}{c|}{\bf
line} &
 \multicolumn{2}{c|}{\bf plane} \\
\hline
 & angular & blind & angular & blind & angular & blind \\
\hline  \multicolumn{7}{|l|}{\bf point} \\
\hline  coincidence & 0 & 3 & 0 & 2 & 0 & 1 \\
 distance & 0 & 1 & 0 & 1 & 0 & 1 \\
\hline \multicolumn{7}{|l|}{\bf line} \\
\hline 
 coincidence &  &  & 2 & 2 & 1 & 1 \\
 distance &  &  & 0 & 1 & 1 & 1 \\
 parallel &  &  & 2 & 0 & 1 & 0 \\
 perpendicular &  &  & 1 & 0 & 2 & 0 \\
 fixed angular &  &  & 1 & 0 & 1 & 0 \\
\hline  \multicolumn{7}{|l|}{\bf plane} \\
\hline 
 coincidence & & & & & 2 & 1  \\
 distance & & & & & 2 & 1  \\
parallel & & & & & 2 & 0 \\
 perpendicular  & & & & & 1 & 0 \\
 fixed angular & & & & & 1 & 0\\
 \hline
\end{tabular}
\end{center}
\caption{Association of body-and-cad ({\em coincidence, angular, distance}) constraints 
with the number of blind and angular primitive constraints. As an example of how to read
the table, the 
last two columns (corresponding to {\bf plane}) of row 3 (corresponding to {\bf coincidence}    
under {\bf line}) indicate
that a {\bf line-plane
coincidence} constraint reduces to 1 angular 
and 1 blind primitive constraint.} 
\label{table:constraintAssoc}
\end{table}

The geometry of the constraint represented by each cad graph edge is
given in terms of points, lines, or planes affixed to the $n$ rigid bodies; each line or
plane may be described by a point on a rigid body and a direction vector.  Let $E_i$ denote the set of edges of color $c_i.$  
Then we define a ``length'' function $L_i$ that associates to each edge in $E_i$ the real-valued points, vectors and values 
(e.g., specified distance or angle) needed to describe the constraint; the full description of the family of functions $L_i$ 
is below. With these definitions, we can formally define a \emph{body-and-cad framework} to be $(G,c, \bl)= (G,c, L_1, \ldots, L_{21})$, 
the cad graph $(G,c)$ together with the functions $L_i$.

\begin{itemize}
	\item {\bf Point-point coincidence:}
	$L_1:E_1 \rightarrow \RR^3$ maps an edge $e=ij$ to a point $\vec p$ so that it is constrained
	to lie on bodies $i$ and $j$ simultaneously.

	\item {\bf Point-point distance:}
	$L_2:E_2 \rightarrow \RR^3 \times \RR^3 \times \RR$ maps an edge $e=ij$ to a triple $(\vec p_i, \vec p_j, a)$ 
	so that the point $\vec p_i$ affixed to body $i$ is constrained to lie a distance $a$ from the 
	point $\vec p_j$ affixed to body $j$.
	
	\item {\bf Point-line coincidence:}
	$L_3:E_3 \rightarrow \RR^3 \times (\RR^3 \times \RR^3)$
	maps an edge $e=ij$ to a pair $(\vec p_i, (\vec p_j, \vec d))$ so that
	point $\vec p_i$ affixed to body $i$
	is constrained to lie on the line $(\vec p_j, \vec d)$ affixed to body $j$.
	
	\item {\bf Point-line distance:}
	$L_4:E_4 \rightarrow \RR^3 \times (\RR^3 \times \RR^3) \times \RR$
	maps an edge $e=ij$ to a triple $(\vec p_i, (\vec p_j, \vec d), a)$ so that
	point $\vec p_i$ affixed to body $i$
	is constrained to lie a distance $a$ from the line $(\vec p_j, \vec d)$ affixed to body $j$.
	
	\item {\bf Point-plane coincidence:} 
	$L_5:E_5 \rightarrow \RR^3 \times (\RR^3 \times \RR^3)$
	maps an edge $e=ij$ to a pair $(\vec p_i, (\vec p_j, \vec d))$ so that
	the point $\vec p_i$ affixed to body $i$ is constrained to lie
	in the plane $(\vec p_j, \vec d)$ affixed to body $j$.
	
	\item {\bf Point-plane distance:} 
	$L_6:E_6 \rightarrow \RR^3 \times (\RR^3 \times \RR^3) \times \RR$
	maps an edge $e=ij$ to a triple $(\vec p_i, (\vec p_j, \vec d), a)$ so that
	the point $\vec p_i$ affixed to body $i$ is constrained to lie a distance $a$
	from the plane $(\vec p_j, \vec d)$ affixed to body $j$.
	
	\item {\bf Line-line parallel:} 
	$L_{7}:E_{7} \rightarrow \RR^3 \times \RR^3 \times \RR^3$
	maps an edge $e=ij$ to a triple $(\vec p_i, \vec p_j, \vec d)$ so that
	the lines $(\vec p_i, \vec d)$ and $(\vec p_j, \vec d)$ affixed to bodies $i$ and $j$, respectively,
	are constrained to remain parallel to each other.
	
	\item {\bf Line-line perpendicular:} 
	$L_{8}:E_{8} \rightarrow (\RR^3 \times \RR^3) \times (\RR^3 \times \RR^3)$
	maps an edge $e=ij$ to a pair $((\vec p_i, \vec d_i), (\vec p_j, \vec d_j))$ so that
	the lines $(\vec p_i, \vec d_i)$ and $(\vec p_j, \vec d_j)$ affixed to bodies $i$ and $j$, respectively,
	are constrained to remain perpendicular to each other.
	
	\item {\bf Line-line fixed angular:} 
	$L_{9}:E_{9} \rightarrow (\RR^3 \times \RR^3) \times (\RR^3 \times \RR^3) \times \RR$
	maps an edge $e=ij$ to a triple $((\vec p_i, \vec d_i), (\vec p_j, \vec d_j), \alpha)$ so that
	the lines $(\vec p_i, \vec d_i)$ and $(\vec p_j, \vec d_j)$ affixed to bodies $i$ and $j$, respectively,
	are constrained to maintain the angle $\alpha$ between them.
	
	\item {\bf Line-line coincidence:}
	$L_{10}:E_{10} \rightarrow \RR^3 \times \RR^3$
	maps an edge $e=ij$ to a pair $(\vec p, \vec d)$ so that
	the line $(\vec p, \vec d)$ is constrained to be affixed to bodies $i$ and $j$ 
	simultaneously.
	
	\item {\bf Line-line distance:}
	$L_{11}:E_{11} \rightarrow (\RR^3 \times \RR^3) \times (\RR^3 \times \RR^3) \times \RR$
	maps an edge $e=ij$ to a triple $((\vec p_i, \vec d_i), (\vec p_j, \vec d_j), a)$ so that
	the lines $(\vec p_i, \vec d_i)$ and $(\vec p_j, \vec d_j)$ affixed to bodies $i$ and $j$, 
	respectively, are constrained to lie a distance $a$ from each other.
	
	\item {\bf Line-plane parallel:} 
	$L_{12}:E_{12} \rightarrow (\RR^3 \times \RR^3) \times (\RR^3 \times \RR^3)$
	maps an edge $e=ij$ to a pair $((\vec p_i, \vec d_i), (\vec p_j, \vec d_j))$ so that
	the line $(\vec p_i, \vec d_i)$ and plane $(\vec p_j, \vec d_j)$ affixed to bodies $i$ and $j$, respectively,
	are constrained to remain parallel to each other.
	
	\item {\bf Line-plane perpendicular:} 
	$L_{13}:E_{13} \rightarrow (\RR^3 \times \RR^3) \times (\RR^3 \times \RR^3)$
	maps an edge $e=ij$ to a pair $((\vec p_i, \vec d_i), (\vec p_j, \vec d_j))$ so that
	the line $(\vec p_i, \vec d_i)$ and plane $(\vec p_j, \vec d_j)$ affixed to bodies $i$ and $j$, respectively,
	are constrained to remain perpendicular to each other.
	
	\item {\bf Line-plane fixed angular:} 
	$L_{14}:E_{14} \rightarrow (\RR^3 \times \RR^3) \times (\RR^3 \times \RR^3) \times \RR$
	maps an edge $e=ij$ to a triple $((\vec p_i, \vec d_i), (\vec p_j, \vec d_j), \alpha)$ so that
	the line $(\vec p_i, \vec d_i)$ and plane $(\vec p_j, \vec d_j)$ affixed to bodies $i$ and $j$, respectively,
	are constrained to maintain the angle $\alpha$ between them.

	\item {\bf Line-plane coincidence:}
	$L_{15}:E_{15} \rightarrow (\RR^3 \times \RR^3) \times (\RR^3 \times \RR^3)$
	maps an edge $e=ij$ to a pair $((\vec p_i, \vec d_i), (\vec p_j, \vec d_j))$ so that
	the line $(\vec p_i, \vec d_i)$ affixed to body $i$ is constrained to lie
	in the plane $(\vec p_j, \vec d_j)$ affixed to body $j$.
	
	\item {\bf Line-plane distance:}
	$L_{16}:E_{16} \rightarrow (\RR^3 \times \RR^3) \times (\RR^3 \times \RR^3) \times \RR$
	maps an edge $e=ij$ to a triple $((\vec p_i, \vec d_i), (\vec p_j, \vec d_j), a)$ so that
	the line $(\vec p_i, \vec d_i)$ affixed to body $i$ is constrained to lie a distance $a$
	from the plane $(\vec p_j, \vec d_j)$ affixed to body $j$.
	
	\item {\bf Plane-plane parallel:} 
	$L_{17}:E_{17} \rightarrow \RR^3 \times \RR^3 \times \RR^3$
	maps an edge $e=ij$ to a triple $(\vec p_i, \vec p_j, \vec d)$ so that
	the planes $(\vec p_i, \vec d)$ and $(\vec p_j, \vec d)$ affixed to bodies $i$ and $j$, respectively,
	are constrained to remain parallel to each other.
	
	\item {\bf Plane-plane perpendicular:} 
	$L_{18}:E_{18} \rightarrow (\RR^3 \times \RR^3) \times (\RR^3 \times \RR^3)$
	maps an edge $e=ij$ to a pair $((\vec p_i, \vec d_i), (\vec p_j, \vec d_j))$ so that
	the planes $(\vec p_i, \vec d_i)$ and $(\vec p_j, \vec d_j)$ affixed to bodies $i$ and $j$, respectively,
	are constrained to remain perpendicular to each other.
	
	\item {\bf Plane-plane fixed angular:} 
	$L_{19}:E_{19} \rightarrow (\RR^3 \times \RR^3) \times (\RR^3 \times \RR^3) \times \RR$
	maps an edge $e=ij$ to a triple $((\vec p_i, \vec d_i), (\vec p_j, \vec d_j), \alpha)$ so that
	the planes $(\vec p_i, \vec d_i)$ and $(\vec p_j, \vec d_j)$ affixed to bodies $i$ and $j$, respectively,
	are constrained to maintain the angle $\alpha$ between them.
	
	\item {\bf Plane-plane coincidence:} 
	$L_{20}:E_{20} \rightarrow \RR^3 \times \RR^3 $
	maps an edge $e=ij$ to the pair $(\vec p, \vec d)$ so that
	the plane $(\vec p, \vec d)$ 
	is constrained to be affixed to both bodies $i$ and $j$ simultaneously.

	\item {\bf Plane-plane distance:}
	$L_{21}:E_{21} \rightarrow \RR^3 \times \RR^3 \times \RR^3 \times \RR$
	maps an edge $e=ij$ to a quadruple $(\vec p_i, \vec p_j, \vec d, a)$ so that
	the planes $(\vec p_i, \vec d)$ and $(\vec p_j, \vec d)$ affixed to bodies $i$ and $j$, respectively,
	are constrained to have the distance $a$ between them. 
\end{itemize}
\end{document}